\newif\ifjournalformat 
    \newcommand{\tikzmedium}{\columnwidth}
    \newcommand{\tikzlarge}{0.5\textwidth}
    \newenvironment{widefigure}{\begin{figure*}[t]}{\end{figure*}}
    \newcommand{\tikzmedium}{0.5\textwidth}
    \newcommand{\tikzlarge}{0.7\textwidth}
    \newenvironment{widefigure}{\begin{figure}[t]}{\end{figure}}
\tikzset{
        interface_right/.style={        
        postaction={draw,decorate,decoration={border,angle=-45,
                    amplitude=0.1cm,segment length=2mm}}},
                    interface_left/.style={        
        postaction={draw,decorate,decoration={border,angle=45,
                    amplitude=0.1cm,segment length=2mm}}},
}
\newcommand{\Prob}{\mathbf{P}}
\renewcommand{\P}{\mathbf{P}}
\newcommand{\R}{\mathbb{R}}
\newtheorem{Theorem}{Theorem}[section]
\newtheorem{Proposition}[Theorem]{Proposition}
\theoremstyle{definition}
\newtheorem{Remark}[Theorem]{Remark}
\begin{document}

\title{Reflected wireless signals under random spatial sampling}

\author{H. Paul Keeler
\thanks{H.P. Keeler is with the School of Mathematics and Statistics, University of Melbourne, Australia. (e-mail: hpkeeler@unimelb.edu.au) }
}

\maketitle

\begin{abstract}
We present a propagation model showing that a transmitter randomly positioned in space generates unbounded peaks in the histogram of the resulting power, provided the signal strength is an oscillating or non-monotonic function of distance. Specifically, these peaks are singularities in the empirical probability density that occur at turning point values of the deterministic propagation model. We explain the underlying mechanism of this phenomenon through a concise mathematical argument. This observation has direct implications for estimating random propagation effects such as fading, particularly when reflections off walls are involved. 

Motivated by understanding intelligent surfaces, we apply this fundamental result to a physical model consisting of a single transmitter between two parallel passive walls. We analyze signal fading due to reflections and observe power oscillations resulting from wall reflections---a phenomenon long studied in waveguides but relatively unexplored in wireless networks. For the special case where the transmitter is placed halfway between the walls, we present a compact closed-form expression for the received signal involving the Lerch transcendent function. The insights from this work can inform design decisions for intelligent surfaces deployed in cities.
\end{abstract}

\begin{IEEEkeywords}
Distribution, Lerch function, reflection, received signal power, random phases.
\end{IEEEkeywords}

\section{Introduction}\label{sec:Introduction}
Imagine a single wireless transmitter and an observer located some fixed distance apart. Assume the  signal power is a  function of distance. Provided that the function has  turning points, then if we repeatedly place the transmitter at random, the resulting randomized signal power has a probability density with peaks or, more specifically, singularities. These singularities create technical challenges for mathematical and statistical treatments of wireless networks, as essentially all mathematical models and methods assume no such singularities. In this paper we show that the conditions for the singularities, which are asymptotically the inverse square root kind, naturally arise in a model of wireless signals reflecting off building walls.

When electromagnetic signals collide with objects, they undergo reflection and absorption and may take multiple paths. A popular approach is to model these phenomena using ray tracing, which is a deterministic method based on classical results from physics, more specifically optics, such as the Snell--Descartes law. This technique is, in turn, derived from Fermat's principle, which states that electromagnetic waves travel paths that take the least amount of time (more precisely, are stationary with time). These results allow researchers to estimate signal strengths by tracing signal paths as they collide with obstacles.

To incorporate randomness into signal models, researchers have also developed models that provide a statistical description of how such collisions cause signal strengths to vary. Understanding signal propagation in cities is important for emerging technologies such as intelligent reconfigurable surfaces and  deployments of fifth and sixth generation networks. A relevant technological consideration is intelligent (reconfigurable) surfaces. These smart or intelligent walls grant some control over the phases of reflected signals, with the aim of removing the effects of destructive interference~\cite{di2019smart}.

Understanding how random transmitter positioning creates probability density singularities is particularly important for deployment of intelligent surfaces. While this technology aims to control reflected signal phases, the fundamental geometric relationships that create turning points---and thus singularities---persist. Optimally tuning intelligent surfaces requires understanding where these singularities occur and how phase control can mitigate their effects on overall system performance.

To this end, we develop a propagation model---or, more precisely, a fading model---for describing electromagnetic wave propagation between two parallel walls by employing the method of images. The method of images is a standard technique often used in, for example, electrostatics to model electric fields induced by charges near conducting surfaces~\cite[Chapter~2]{jackson2012classical},~\cite[Section~3.2]{griffiths2018introduction}, appearing long ago in the classic work by Maxwell~\cite[Chapter~XI]{maxwell1881treatise}. It is a general method for incorporating boundary conditions into models of physical systems.

We examine the resulting models using mathematical techniques, which we then support with numerical and simulation results.  

\subsection{Contributions}
The paper makes the following contributions:
\begin{itemize}
\item We establish that randomly positioning a transmitter produces singularities in the probability density of received signal power at values corresponding to turning points of the deterministic power function.

\item For the symmetric two-wall configuration, we derive a closed-form expression for the received signal using the Lerch transcendent function.

\item We develop a physical-statistical propagation model for parallel walls using the method of images that bridges deterministic ray-tracing and purely statistical fading models.

\item We demonstrate both theoretically and numerically that constructive and destructive wave interference creates turning points in signal power, which manifest as peaks (singularities) in the probability density under random transmitter placement.

\item We show that the locations of these density singularities can be predicted from the geometry of the configuration, providing insight into signal fading in environments with intelligent surfaces.
\end{itemize}

\section{Related work}\label{sec:related.work}

\subsection{Random Propagation Models with Singularities}
There is a wide range of statistical propagation models for fading and shadowing~\cite{simon2005digital}. Although singularities can appear in density expressions, they are usually edge cases existing at the origin, as with the Nakagami and log-normal models, which are otherwise smooth models. Apart from that, we are unaware of random propagation models with singularities in their probability densities. We speculate that this is because such models are often developed using specific approaches. Given some independence assumptions, applying a suitable transformation and the central limit theorem produces a model based on normal (or Gaussian) random variables. Conversely, we discovered these densities by first developing a deterministic model (with an oscillating component) and then randomizing it.

Another reason is that it is considerably more complicated to work with such densities. There is much literature on statistically estimating probability densities, but considerably less on densities with discontinuities such as singularities. Generally speaking, much of statistical density estimation theory hinges upon smoothness assumptions on the underlying density functions; see Remark~\ref{rem:density_est} for details and references.

\subsection{Electromagnetic Signals in Streets}
To understand how electromagnetic signals propagate down a city street, Amitay~\cite{amitay1992modeling} developed in the early 1990s what he called the canyon model, consisting of two infinitely long parallel walls. The work is based on computer simulations that consider in total ten reflected signals or rays, which include both ground and wall reflections, whereas we only examine wall reflections. Amitay included only the first and second images in his model.

There are decades of work on how electromagnetic waves travel in waveguides, which are usually linear structures designed for such waves to propagate with little power attenuation. Researchers have also applied this theory to the problem of fading due to buildings. For example, independently of our work, Kyritsi studied in her doctorate thesis a mathematical model~\cite[Section~7.1.3]{kyritsi2001multiple} that is essentially the same as our non-random one when the walls are equidistant. Motivated by understanding multiple-input and multiple-output systems, she also used the method of images to derive the same expression as ours with an infinite series representing an infinite number of images. She then truncated to a finite number of terms, but no connection was made to the Lerch function, which we employ in our results. Kyritsi~\cite[Section~7.1.4]{kyritsi2001multiple} also used the method of images to study a rectangular model, meaning a room with four walls. Cox and Kyritsi~\cite{kyritsi2001propagation} developed this model in a paper where they compared it to real-world measurements. No randomness was added to these models, so beyond the initial non-random model setup, their work does not overlap with ours.

A more rigorous treatment of electromagnetic reflections is  \emph{exact image theory} developed by  Lindell and Alanen~\cite{lindell1984exact_part1,lindell1984exact_part2,lindell1984exact_part3},  which accounts for polarization-dependent reflection coefficients, angle-dependent effects, and near-field corrections that the classical method of images neglects. Although exact image theory provides greater physical accuracy, for demonstrating our key results, the classical method of images suffices.

Researchers have also examined signal propagation in tunnels, starting in the 1970s, such as the paper by Emslie, Lagace, and Strong~\cite{emslie1975theory}, who used rectangular waveguide models. Some work has used the so-called uniform (geometrical) theory of diffraction, which combines ideas from Maxwell's theory of electromagnetism and classical geometrical optics, such as Fermat's principle~\cite{mcnamara1990introduction}. However, we have found no research on tunnel propagation that is sufficiently relevant to our work; see the survey on tunnel propagation~\cite{hrovat2014survey}.

Beyond the aforementioned work, there appears to be little that is particularly relevant to our new model and results. The reason for this, we argue, is that most signal propagation models fall into two categories: purely random, such as Rayleigh fading, or purely deterministic, such as ray tracing. Our model is a hybrid of these two approaches, producing potentially new results and observations.

\subsection{Stochastic Geometry Models}
Our work only examines a model with a single transmitter-receiver pair, but it raises obvious questions for the multi-transmitter setting. Here it is natural to employ techniques from point processes and stochastic geometry. This provides a mathematical framework for analyzing wireless networks of randomly positioned nodes, typically with Poisson point processes; see the textbooks by Baccelli and B{\l}aszczyszyn~\cite{BB09} and Haenggi~\cite{haenggi2012stochastic}. In the last couple of years work has already commenced on examining intelligent surfaces using stochastic geometry models~\cite{zhang2021reconfigurable,shafique2022stochastic,lee2025much}, although at the moment we see little overlap with our current work.

The Poisson assumption is partly supported by mathematical work showing that randomly weakening multiple signals can make a non-Poisson wireless network appear Poisson to a single user~\cite{keeler2018wireless,keeler2016stronger}. This is the case even if there is a certain degree of dependence among the random effects weakening the signals~\cite{ross2017wireless}. One potential research direction is sensibly combining these results and our current work, and examining, for example, the effects of the singularities in the density expressions. With sufficient randomness and regular walls, we conjecture that the Poisson approximation and convergence results will still hold. But the use of intelligent surfaces could potentially introduce too much dependence into the model, breaking any Poisson results.




\section{Notation}\label{sec:notation}
Table~\ref{tab:notation} summarizes the key notation used throughout this paper.

\begin{table}[t]
\centering
\caption{Key notation and symbols}
\label{tab:notation}
\begin{tabular}{cl}
\hline
\textbf{Symbol} & \textbf{Description} \\
\hline
\multicolumn{2}{c}{\textit{Parameters}} \\
$j$ & Imaginary unit, $j := \sqrt{-1}$ \\
$\beta$ & Attenuation exponent \\
$\kappa$ & Power reflection coefficient \\
$k$ & Wave number, $k = 2\pi/\lambda$ \\
$\lambda$ & Signal wavelength \\
$a, b$ & Distances from receiver $O$ to right and left walls \\
$d$ & Wall separation distance, $d = a + b$ \\
\hline
\multicolumn{2}{c}{\textit{Spatial Variables}} \\
$O$ & Receiver location at the origin \\
$z$ & Transmitter location \\
$r$ & Distance from transmitter to receiver located at origin $O$ \\
$x, y$ & Cartesian coordinates of transmitter \\
$\theta$ & Polar angle \\
$\hat{r}_n$ & Distance traversed via $n$-th right image \\
$\hat{\ell}_n$ & Distance traversed via $n$-th left image \\
\hline
\multicolumn{2}{c}{\textit{Signal Functions}} \\
$S(r)$ & Received signal from reflections only \\
$\hat{S}(r)$ & Total received signal (including line-of-sight) \\
$P(r)$ & Received power from reflections, $P(r) = |S(r)|^2$ \\
$\hat{P}(r)$ & Total received power, $\hat{P}(r) = |\hat{S}(r)|^2$ \\
$\alpha(r)$ & Attenuation function \\
$w(r)$ & Wave (oscillating) function \\
$\Phi(\zeta,s,\gamma)$ & Lerch transcendent function \\
\hline
 \multicolumn{2}{c}{\textit{Probability}} \\
 $t_i$ & Location of $i$-th turning point \\
 $F_V(v)$ & Cumulative distribution function of random variable $V$ \\
 $f_V(v)$ & Probability density function \\
 \hline
\end{tabular}
\end{table}

\section{Turning Points in the Propagation Model}\label{sec:Turning}
On the plane consider a single transmitter located at a distance $r$ from an observer at the origin~$O$. Assume the received signal at the origin $S(r)$ is a deterministic function of $r$. For the moment, assume the signal $S(r)$ can be  decomposed as
\begin{align}\label{e.signal}
S(r)=A(r)W(r), \quad r\geq 0.
\end{align}
Here $W(r)$ is an oscillating (possibly complex) function, while $A(r)$ is a (real) non-negative function, which we assume is monotonically decreasing in $r$. For $r>0$, we assume that both $A(r)$ and $W(r)$ are continuous and differentiable. In physical terms, the functions $W(r)$ and $A(r)$ respectively model the wave and attenuation aspects of the propagated signal. 

The received power is given by
\begin{align}\label{e.power}
P(r):=|S(r)|^2=A(r)^2|W(r)|^2.
\end{align}
 Note that we may equivalently write $P(x,y)$ when using Cartesian coordinates, where the relationship to polar coordinates is given by $r=\sqrt{x^2+y^2}$. The decomposition $S(r)=A(r)W(r)$ provides conceptual insight: the oscillating function $W(r)$ creates power oscillations while $A(r)$ modulates their amplitude, which then influences the signal power $P(r)$. However, in complex multipath scenarios, as we develop later, the received signal may not decompose cleanly into this form---yet we will see that the same phenomena still arise.

The function $P$ has local minima and maxima or \emph{turning points}, where its derivative $P'$ is zero. (More generally, these are \emph{stationary points}, since saddle points of a multi-variable signal power $P(x,y)$ can also generate singularities.) This function is non-random, but we are interested in the effects of randomly placing the transmitter. We will show that a turning point of the signal power function $P$ generates a division by zero. This, in turn, creates a singularity in the probability density of the randomized version of the signal power, even when it has a continuous probability distribution. 

\subsection{Peaks in the Probability Density}
In a bounded sample interval $(r_{\text{L}},r_{\text{U}})$ for the radial distance, we assume the function $P$ has $m$ turning points denoted by $t_1,\ldots,t_m$, where $r_{\text{L}} \leq t_1,\ldots,t_m \leq r_{\text{U}}$,  meaning $P'(t_i) = 0$ for $i=1,\dots,m$. We assume the function $P$ is strictly increasing or decreasing between any two consecutive turning point locations $t_i$ and $t_{i+1}$, as well either endpoint $r_{\text{L}}$ or $r_{\text{U}}$ and its nearest turning point. We can now examine what happens when we randomly place the transmitter, which gives a randomized version of the signal power. 

\begin{widefigure}
\centering
\resizebox{\tikzlarge}{!}{


\begin{tikzpicture}[scale=1]
    \draw[->, thick] (-0.5,0) -- (10.5,0) node[right] {$r$};
    \draw[->, thick] (0,-1.5) -- (0,1.5) node[above] {$h(r)$};
    
    \def\ra{0}
    \def\rb{1.57}
    \def\rc{4.71}
    \def\rd{7.85}
    \def\re{10.99}  
    
    
    \tikzmath{\rmax = 3*3.141592;} 

    \draw[very thick, blue, domain=0:\rmax, samples=200, smooth] 
        plot (\x, {sin(\x r)}); 
    
    \pgfmathsetmacro{\ta}{1.5708}  
    \pgfmathsetmacro{\tb}{4.7124}  
    \pgfmathsetmacro{\tc}{7.8540}  
    
    \fill[red] (\ta, 1) circle (2pt);
    \fill[red] (\tb, -1) circle (2pt);
    \fill[red] (\tc, 1) circle (2pt);
    
    \draw[dashed, gray] (\ta, 0) -- (\ta, 1);
    \draw[dashed, gray] (\tb, 0) -- (\tb, -1);
    \draw[dashed, gray] (\tc, 0) -- (\tc, 1);
    
    \node[below,right] at (0, -0.3) {$r_{\text{L}}$};
    \node[below] at (\ta, -0.1) {$t_1$};
    \node[above] at (\tb, +0.1) {$t_2$};
    \node[below] at (\tc, -0.1) {$t_3$};
    \node[below] at (\rmax, -0.1) {$r_{\text{U}}$};
    
    
    \draw[decorate, decoration={brace, amplitude=8pt, mirror}] 
        (0,-1.4) -- (\ta,-1.4) node[midway, below=10pt] {$A_1 = (r_{\text{L}}, t_1)$};
    \draw[decorate, decoration={brace, amplitude=8pt, mirror}] 
        (\ta,-1.4) -- (\tb,-1.4) node[midway, below=10pt] {$A_2 = (t_1, t_2)$};
    \draw[decorate, decoration={brace, amplitude=8pt, mirror}] 
        (\tb,-1.4) -- (\tc,-1.4) node[midway, below=10pt] {$A_3 = (t_2, t_3)$};
    \draw[decorate, decoration={brace, amplitude=8pt, mirror}] 
        (\tc,-1.4) -- (\rmax,-1.4) node[midway, below=10pt] {$A_4 = (t_3, r_{\text{U}})$};
    
    \node[blue, font=\normalsize] at (0.5, 0.8) {$g_1$};
    \node[blue, font=\normalsize] at (3.2, -0.7) {$g_2$};
    \node[blue, font=\normalsize] at (6.3, 0.8) {$g_3$};
    \node[blue, font=\normalsize] at (9.1, 0.8) {$g_4$};
    
    \node[blue, font=\large] at (9.5, 1.5) {$h(r) = \sin(r)$};
    
    
\end{tikzpicture}

}
\caption{A wide class of non-monotonic functions, such as $h(r)=\sin(r)$ on an interval $(r_{\text{L}},r_{\text{U}})$, can be decomposed into a collection of strictly monotonic functions $g_1,\dots,g_{\ell}$ by suitably partitioning the function's domain.  $t_i$ are the locations of the turning points where the derivative $h'(t_i) = 0$. Partition: $(r_{\text{L}}, r_{\text{U}}) = A_0\cup A_1 \cup A_2 \cup A_3 \cup A_4$, where the set $A_0$ contains all the endpoints. The partition is such that each function $g_i$, which is defined as $h$ restricted to $A_i$, is strictly monotonic on $A_i$ for $i=1,\dots,4$. }\label{fig:decomp}
\end{widefigure} 

\subsubsection{No Turning Points} 
We consider the open interval $(t_i,t_{i+1})$ such that the function $P$ is strictly monotonic on this interval; refer to Figure~\ref{fig:decomp} for an example. Let the function $g_i$ be the restriction of $P$ on this interval, meaning $g_i(r)=P(r)$ whenever $r\in(t_i,t_{i+1})$, so $g_i$ is a strictly monotonic function with domain $(t_i,t_{i+1})$, which has a well-defined inverse $g_i^{-1}$. The function $g_i$ is continuous, which, coupled with its strict monotonicity on the interval $(t_i,t_{i+1})$, means that $g_i$ is a one-to-one (or bijective) function on the interval $(t_i,t_{i+1})$~\cite[Proposition~9.8.3]{tao2016analysis}. (These assumptions make $g_i$ an ideal function for transforming a random variable.) 

For clarity, we further assume that $g_i$ is strictly increasing on the interval $(t_i,t_{i+1})$, but a similar analysis applies if it is strictly decreasing. Now we can give mathematical expressions for the (cumulative) probability distribution and the probability density of the received signal when the transmitter is randomly positioned.
\begin{Proposition}\label{prop:monotonic}
Let $U$ be a continuous random variable defined on the interval $(t_i,t_{i+1})$ with probability distribution $F_U(u)=\P(U\leq u)$. Assume the above conditions on the function $g_i$, which is strictly increasing on the interval $(t_i,t_{i+1})$. The random variable $V_i=g_i(U)$ has the probability distribution
\begin{align}
F_{V_i}(v) := \P(g_i(U)\leq v) = F_U(g_i^{-1}(v)), \quad v\in(v_i,v_{i+1}),
\end{align}
where the endpoints are $v_i:=g_i(t_i)$ and $v_{i+1}:=g_i(t_{i+1})$. The probability density of $V_i=g_i(U)$ at $v\in(v_i,v_{i+1})$ is given by
\begin{align}\label{e.densitymonotonic}
f_{V_i}(v) := \frac{dF_{V_i}(v)}{dv} = \frac{f_U(g_i^{-1}(v))}{|g_i'(g_i^{-1}(v))|}.
\end{align}
\end{Proposition}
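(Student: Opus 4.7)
The plan is to treat this as a textbook monotone change-of-variables result: first establish the distribution function by directly inverting the strictly increasing map, then differentiate using the chain rule together with the inverse function theorem. The key structural ingredients (continuity, strict monotonicity, differentiability of $g_i$ on the open interval $(t_i,t_{i+1})$, and existence of the inverse $g_i^{-1}$) have already been set up in the paragraph preceding the proposition, so the argument is largely a matter of stringing them together cleanly.

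For the cumulative distribution, I would start from the definition $F_{V_i}(v)=\P(g_i(U)\leq v)$ and use the fact that $g_i$ is a continuous, strictly increasing bijection from $(t_i,t_{i+1})$ onto $(v_i,v_{i+1})$, with $v_i=g_i(t_i)$ and $v_{i+1}=g_i(t_{i+1})$. This bijectivity lets me rewrite the event $\{g_i(U)\leq v\}$ as $\{U\leq g_i^{-1}(v)\}$ for every $v\in(v_i,v_{i+1})$, since applying $g_i^{-1}$ preserves the inequality. Taking probabilities immediately yields $F_{V_i}(v)=F_U(g_i^{-1}(v))$.

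For the density, I would differentiate $F_{V_i}$ using the chain rule, which gives $f_{V_i}(v)=f_U(g_i^{-1}(v))\cdot (g_i^{-1})'(v)$, where $f_U=F_U'$ is the density of $U$. The derivative of the inverse comes from the inverse function theorem: $(g_i^{-1})'(v)=1/g_i'(g_i^{-1}(v))$. Since $g_i$ is strictly increasing, $g_i'\geq 0$, so inserting an absolute value around $g_i'$ does no harm and produces the stated formula; the absolute value also makes the expression directly reusable for the strictly decreasing case mentioned just before the proposition. The one step requiring care—and the only place where the argument is not purely mechanical—is the application of the inverse function theorem, which requires $g_i'\neq 0$ on $(t_i,t_{i+1})$. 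This is where I would explicitly invoke the standing assumption that $P$ has no turning points strictly between $t_i$ and $t_{i+1}$: because $P=g_i$ on this open interval and $P'$ vanishes only at the turning points by assumption, $g_i'$ is nonzero throughout $(t_i,t_{i+1})$, so $g_i^{-1}$ is differentiable on $(v_i,v_{i+1})$ and the formula \eqref{e.densitymonotonic} is well defined.
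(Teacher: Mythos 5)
Your proposal is correct and follows essentially the same route as the paper's proof: invert the strictly increasing map to obtain $F_{V_i}(v)=F_U(g_i^{-1}(v))$, then differentiate via the chain rule and the inverse-function derivative formula $(g_i^{-1})'(v)=1/g_i'(g_i^{-1}(v))$. Your extra remark pinning down why $g_i'\neq 0$ on the open interval (no turning points there, since turning points are exactly the zeros of $P'$) is a welcome bit of added care but does not change the argument.
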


\begin{Remark}
For the probability density $f_{V_i}$, the absolute value in expression~\eqref{e.densitymonotonic} is not needed because the function $g_i$ is increasing. It is only needed for decreasing $g_i$, so expression~\eqref{e.densitymonotonic} holds for any strictly monotonic $g_i$. For the probability distribution $F_{V_i}$, a decreasing $g_i$ requires taking the complement, in addition to reversing the endpoints of the interval, so $v\in(v_{i+1},v_i)$.
\end{Remark}

\begin{proof}
See Appendix~\ref{s.proof.prop:monotonic}.
\end{proof}

The probability density expression~\eqref{e.densitymonotonic} shows that the first derivative appears in the denominator, which becomes zero at turning points in $P$, revealing the source of the singularities. The probability density $f_{V_i}(v)$ is not defined at these singularities, but we can examine how it behaves asymptotically when we include a single turning point.

\begin{figure}[t]
\centering
\resizebox{\tikzmedium}{!}{
\begin{tikzpicture}

\begin{scope}[shift={(-6,0)}]
    \draw[->, thick] (0,0) -- (5,0) node[right] {$r$};
    \draw[->, thick] (0,0) -- (0,4.5) node[above] {$P(r)$};
    
    \draw[blue, very thick, domain=0.5:4.5, samples=100] 
        plot (\x, {0.8 + 0.4*(\x-2.5)^2});
    
    \coordinate (t) at (2.5, 0);
    \coordinate (Pt) at (0, 0.8);
    \coordinate (turnpt) at (2.5, 0.8);
    
    \draw[dashed, gray] (t) -- (turnpt);
    \node[below] at (t) {$t$};
    
    \draw[dashed, gray] (Pt) -- (turnpt);
    \node[left] at (Pt) {$P(t)$};
    
    \fill[red] (turnpt) circle (3pt);
    \node[below right, red] at (turnpt) {$P'(t)=0$};
    
    \coordinate (v1) at (1.8, 2.0);
    \coordinate (v2) at (3.2, 2.0);
    \fill[orange] (v1) circle (2pt);
    \fill[orange] (v2) circle (2pt);
    
    \draw[orange, thick, <->] (v1) -- (v2) node[midway, above] {$P(r_1) = P(r_2)$};
    \draw[dashed, orange] (1.8, 0) -- (v1);
    \draw[dashed, orange] (3.2, 0) -- (v2);
    \node[below] at (1.8, 0) {$r_1$};
    \node[below] at (3.2, 0) {$r_2$};
    
\end{scope}

\begin{scope}[shift={(2,0)}]
    \draw[->, thick] (0,0) -- (5,0) node[right] {$v$};
    \draw[->, thick] (0,0) -- (0,4.5) node[above] {$f_V(v)$};
    
    \coordinate (vsing) at (0.8, 0);
    
    \draw[blue, very thick, domain=0.3:0.72, samples=50] 
    plot (\x, {1.2/sqrt(0.8-\x)});

    \draw[blue, very thick, domain=0.88:3.0, samples=50] 
    plot (\x, {1.2/sqrt(\x-0.8)});

    \fill[red] (vsing) circle (3pt);
    \node[below] at (vsing) {$P(t)$};
    
    \draw[dashed, red, thick] (vsing) -- (0.8, 4.5);
    
    \node[red, right] at (1.5, 3.0) {$f_V(v) \to \infty$};
    \node[red, right] at (1.5, 2.5) {as $v \to P(t)$};
    
\end{scope}

\end{tikzpicture}}
\caption{A turning point in the power function $P(r)$ correspond to a 
singularity in the probability density. \textbf{Left:} The power function has a minimum at $r = t$ 
where $P'(t) = 0$. Two distinct locations $r_1$ and $r_2$ produce the same power value (orange points). 
\textbf{Right:} When the transmitter location is randomized, the resulting probability 
density $f_V(v)$ has a singularity at $v = P(t)$ where the derivative $P'(r)$ vanishes. The singularity 
arises from the change-of-variables formula given by expression~\eqref{e.densitymonotonic}.}
\label{fig:singularity_mechanism}
\end{figure}
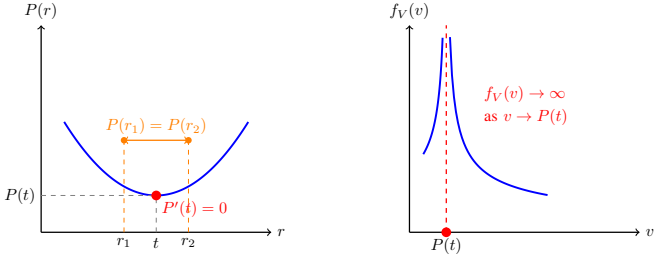

\subsubsection{A Single Turning Point} 
Turning points in the signal power function $P$ result in divisions by zero, creating singularities in the probability density of the randomized signal power, as Figure~\ref{fig:singularity_mechanism} illustrates. To examine this effect closer, we widen our sample interval to $(t_{i-1},t_{i+1})$ so that it covers one turning point of function $P$ located at $t_i$. We assume the function $P$ is twice differentiable in the neighborhood of the turning point $t_i$. For ease of exposition, we further assume the second derivative at the turning point is positive, so $P''(t_i) > 0$. (This assumption only simplifies the exposition and the mathematical argument without changing the overall message.) We define $h_i$ as the restriction of $P$ on the interval $(t_{i-1},t_{i+1})$. To give more explicit asymptotic expressions, we assume that the transmitter is randomly placed according to a uniform variable, but this assumption can be easily lifted. 

Now we state what happens asymptotically to the distribution and density of a randomized version of the signal power $P$.
\begin{Proposition}\label{prop:turningpoint}
Let $U$ be a continuous uniform random variable defined on the interval $(t_{i-1},t_{i+1})$ with probability distribution $F_U(u)=\P(U\leq u)$. Assume the above conditions on the function $h_i(r)$, which has one turning point on the interval $(t_{i-1},t_{i+1})$. The random variable $V_i=h_i(U)$ has a probability distribution $F_{V_i}(v)$, where $v\geq 0$, which behaves asymptotically such that 
\begin{align}\label{e.distsingular}
F_{V_i}(v)\sim\frac{2}{t_{i+1}-t_{i-1}} \left(\frac{2}{P''(t_i)}\right)^{1/2}
 \left(v-P(t_i)\right)^{1/2} \,,
\end{align}
as  $v\to P(t_i)$, where $v$ approaches $P(t_i)$ from above.
Furthermore, the probability density $f_{V_i}(v)$, where $v> 0$, behaves asymptotically such that
\begin{align}\label{e.densitysingular}
f_{V_i}(v)\sim\frac{1}{t_{i+1}-t_{i-1}}\left(\frac{2}{P''(t_i)}\right)^{1/2}\frac{1}{\left(v-P(t_i)\right)^{1/2}}\,,
\end{align}
as  $v\to P(t_i)$, where $P(t_i)\geq 0$ and $v\neq P(t_i)$. 
\end{Proposition}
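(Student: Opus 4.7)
The plan is to reduce both asymptotics to a Taylor expansion of $P$ about the turning point $t_i$ and then either compute the set $\{P\leq v\}$ directly (for the distribution) or invoke Proposition~\ref{prop:monotonic} on each side of $t_i$ (for the density). Since $P'(t_i)=0$ and $P''(t_i)>0$, Taylor's theorem gives
\begin{equation*}
P(r)=P(t_i)+\tfrac{1}{2}P''(t_i)(r-t_i)^2+o\!\left((r-t_i)^2\right)\qquad \text{as } r\to t_i,
\end{equation*}
so $P$ has a strict local minimum at $t_i$. For $v>P(t_i)$ sufficiently close to $P(t_i)$, the level set $\{r\in(t_{i-1},t_{i+1}):h_i(r)\le v\}$ is a symmetric-type interval around $t_i$. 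Solving the quadratic approximation for $P(r)=v$ yields the two roots
\begin{equation*}
r_{\pm}(v)=t_i\pm\sqrt{\frac{2\,(v-P(t_i))}{P''(t_i)}}\,\bigl(1+o(1)\bigr),\qquad v\to P(t_i)^{+}.
\end{equation*}

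For the distribution, I would use the uniformity of $U$: since $f_U(u)=1/(t_{i+1}-t_{i-1})$ on $(t_{i-1},t_{i+1})$,
\begin{equation*}
F_{V_i}(v)=\P\bigl(h_i(U)\le v\bigr)=\frac{r_{+}(v)-r_{-}(v)}{t_{i+1}-t_{i-1}}.
\end{equation*}
Substituting the expansion for $r_\pm(v)$ gives the claimed equivalent~\eqref{e.distsingular}. For the density, I would split the interval at $t_i$ into $(t_{i-1},t_i)$, where $h_i$ is strictly decreasing, and $(t_i,t_{i+1})$, where $h_i$ is strictly increasing. Applying Proposition~\ref{prop:monotonic} to each piece and summing the two contributions yields
\begin{equation*}
f_{V_i}(v)=\frac{1}{t_{i+1}-t_{i-1}}\left(\frac{1}{|P'(r_{-}(v))|}+\frac{1}{|P'(r_{+}(v))|}\right).
\end{equation*}
From the Taylor expansion, $P'(r)=P''(t_i)(r-t_i)+o(r-t_i)$, so $|P'(r_{\pm}(v))|\sim\sqrt{2\,P''(t_i)\,(v-P(t_i))}$, and adding the two identical leading terms recovers~\eqref{e.densitysingular}. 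As a consistency check, differentiating~\eqref{e.distsingular} in $v$ reproduces~\eqref{e.densitysingular}, which is reassuring.

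The main obstacle is the rigorous control of the remainders: I need the $o(1)$ in $r_\pm(v)$ and in $P'(r_\pm(v))$ to be genuinely negligible compared to the leading $\sqrt{v-P(t_i)}$ and $\sqrt{v-P(t_i)}$ terms respectively, which is where the assumption that $P$ is twice differentiable near $t_i$ (and, implicitly, that $P''$ is continuous and strictly positive at $t_i$) is used. Once these remainder bounds are in place, both asymptotics follow by direct substitution. The symmetric case $P''(t_i)<0$ (local maximum) is handled identically with $v\to P(t_i)^-$ instead, and non-uniform $U$ only changes the prefactor to $f_U(t_i)$ by continuity of $f_U$ at $t_i$.
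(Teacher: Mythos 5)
Your proposal is correct and follows essentially the same route as the paper: a Taylor expansion of $P$ about $t_i$, the two inverse branches $r_{\pm}(v)=t_i\pm[2(v-P(t_i))/P''(t_i)]^{1/2}$, and the uniform measure of the level set $\{h_i\le v\}$ to get the distribution asymptotic. The only cosmetic difference is in the density step---the paper differentiates the asymptotic distribution formula, whereas you sum the two monotone-branch contributions via Proposition~\ref{prop:monotonic}---and both computations agree; your explicit flagging of the remainder control is, if anything, slightly more careful than the paper's treatment.
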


\begin{proof}
See Appendix~\ref{s.proof.prop:turningpoint}.
\end{proof}
We see that, provided the second derivative is non-zero, $P''(t_i) \neq 0$, the singularity in the density $f_{V_i}$ is of the inverse (or reciprocal) square root kind. Although the probability density $f_{V_i}(v)$ is not defined at $v=P(t_i)$, it is an integrable singularity.
An obvious research inquiry is establishing which statistical methods deal with such singularities in this setting. 

\subsubsection{Multiple Turning Points}\label{sec:multiple_turning_points}
The previous analysis can be extended to multiple turning points of the signal power $P$. For a sufficiently well-behaved function $h$, we can partition its support into subintervals $A_1,\ldots,A_{\ell}$, resulting in strictly monotonic functions $g_1,\ldots,g_{\ell}$, where the support of each function $g_i$ is the subinterval $A_i$, as illustrated in Figure~\ref{fig:decomp}. (The partition includes an additional set $A_0$ containing all the endpoints of intervals, as well as other stray points where the function $h$ does not behave nicely.)   

This intuitive partitioning approach, however, seems to be rarely described in introductory textbooks or is mentioned in passing without proof~\cite[Section~2.7]{hogg2013mathematical},~\cite[Section~4.4, Remark~2]{rohatgi2015introduction},~\cite[Section~4.7, Equation~(16)]{grimmett2001probability}, which we discovered after extensive searching. Gut~\cite[Section~2.2]{gut2009intermediate} covers the result more thoroughly, but, for our work, the most relevant statement is given by Casella and Berger~\cite[Theorem~2.1.8]{casella2024statistical}. 

Skipping the finer mathematical details, the result basically says that for a random variable $X$ with probability density $f_X$, we obtain the probability density of a random variable $Y=h(X)$ by summing over the partition, giving
\begin{align}\label{e.densitymanytoone}
f_Y(y)=\sum_{i=1}^{\ell} f_X(g_i^{-1}(y))\left|\frac{d}{dy}g_i^{-1}(y)\right|\, & \text{ if }y\in\mathcal{Y}\,,
\end{align}
otherwise $f_Y(y)=0$, where $\mathcal{Y}$ is the support of the random variable $Y$. 

The above expression tells us that the number of singularities can potentially grow with the number of turning points. But the story is a little more subtle, as multiplicity is possible.  A turning point located at $t$ tells us that a singularity will occur, provided $t$ falls inside the sample window. But it is the power value $P(t)$ that tells us precisely where that singularity occurs. And naturally for a simple oscillating function,  different turning point locations $t_1,...,t_{\ell}$ can give the same power values, so $P(t_1)=\dots=P(t_{\ell})$ despite $t_1\neq\dots\neq t_{\ell}$, thus collapsing multiple singularities into one. The example of $h(r)=\sin(r)$, illustrated in Figure~\ref{fig:decomp}, would only generate a maximum of two singularities, even if the sample window was widened to include more turning points; also see the results of a simulated example in Figure~\ref{PPDFY2A}. In other words, more turning points does not necessarily imply more singularities in the probability density.

We now are prepared to apply the above results to a model with an oscillating power signal arising from reflections off walls. 

\section{Geometrical Signal Model}\label{sec:Geometrical}
Consider wireless communication between a receiving node located at the origin $O$ and a transmitting node located at the point $z$ with polar coordinates $(r,\theta)$. We adopt the standard assumption that the transmitted electromagnetic field can be represented by a plane wave that is independent of time and linearly polarized to be parallel to the ground. Following this representation, the transmitted signal at distance $r$ from the transmitter can be expressed as $E(r)=e^{jkr}$, where $j:=\sqrt{-1}$ and $k$ denotes the wave number, which is defined as $k:=2\pi/\lambda$ with $\lambda$ being the signal's wavelength. Thus $k$ is proportional to signal frequency $f$ via $k=2\pi f/c$ where $c$ is the speed of light. Assuming $c=3e8$ m/s,  a frequency range of $1$ to $100$ gigahertz corresponds to a wave number range of roughly $20.1$ to $2010$ rad/m. For our results, however, note that we chose the values of the wave number $k$ merely to illustrate concepts.

We will also incorporate attenuation into our model with a continuous function $\alpha(r)$ to denote the attenuation at distance $r$. 
A typical single-slope attenuation function is $\alpha(r)=r^{-\beta/2}$ with $\beta>0$ being the attenuation exponent, which is the model we will often assume. At distance $r$ from the transmitter, the signal is now $E(r)=\alpha(r)e^{jkr}$.

\section{One Wall}\label{sec:One Wall}
We assume the receiver $O$ and a single transmitter $z$ (with polar coordinates $(r,\theta)$) are located next to an infinitely long vertical wall passing through $x=a$; see Figure~\ref{fig:onewall}. In other words, the transmitter is located on the half-plane to the left of $x=a$, meaning the points $\{(x,y):x\leq a;x,y\in\R\}$, where $a\geq 0$ is the distance between the wall and the origin $O$.

When the signal reaches the wall, we assume that a fraction $\kappa\in[0,1]$ of the transmitted signal power is reflected by the wall, while the remaining power fraction $1-\kappa$ is absorbed. The superposition principle implies that the received signal at distance $r$ from $O$ is a deterministic quantity given by
\begin{align}
\hat{S}(r)&=\alpha(r)e^{jkr}+\sqrt{\kappa}\alpha(\hat{r}_1)e^{-jk\hat{r}_1}\\
&=\alpha(r)e^{jkr}-\sqrt{\kappa}\alpha(\hat{r}_1)e^{jk\hat{r}_1},\label{e.onewall}
\end{align}
where $\hat{r}_1$ is the distance traversed by the signal due to the single image. The first term on the right-hand side of~\eqref{e.onewall} is the line-of-sight signal. The second term is the non-line-of-sight signal for which we used the fact that each reflection adds $\pi$ to the signal phase, thus inverting the wave and explaining the negative sign.

The line-of-sight signal travels between $z$ and $O$, resulting in the horizontal and vertical distances $x:=r\cos\theta$ and $y:=r\sin\theta$, respectively. Then the distance from transmitter to the receiver via the wall is simply
\begin{align}\label{e.r1}
\hat{r}_1=\sqrt{(2a-x)^2+y^2}.
\end{align}

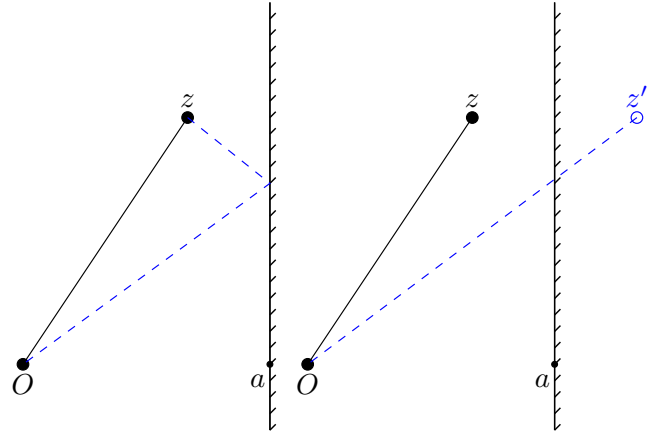
\begin{figure}[t]
\centering
\resizebox{\tikzmedium}{!}{\begin{tikzpicture}[scale=1]
\filldraw 
(2,0) circle (2pt) node[align=center, below] {$O$}       
(5,0) circle (1pt) node[align=right,  below] {\hspace{-1em}$\,a$} 
(4,3) circle (2pt) node[above] {$z$} ;  
\draw[-] (5,-.8) --(5,4.4); 
\draw[line width=.5pt,interface_right] (5,-.8) --(5,4.4); 
   \draw(2,0) -- (4,3)  ;  
    \draw[dashed][color=blue] (4,3) -- (5,2.2) 
        -- (2,0); 
\end{tikzpicture}
\begin{tikzpicture}[scale=1]
\filldraw 
(2,0) circle (2pt) node[align=center, below] {$O$}      
(5,0) circle (1pt) node[align=right,  below] {\hspace{-1em}$\,a$} 
(4,3) circle (2pt) node[above] {$z$} ;  
\draw[-] (5,-.8) --(5,4.4); 
\draw[line width=.5pt,interface_right] (5,-.8) --(5,4.4); 
   \draw(2,0) -- (4,3)  ;  
\draw[color=blue] (6,3) circle (2pt) node[above] {$z'$};  
    \draw[dashed][color=blue] (6,3) 
        -- (2,0); 
\end{tikzpicture}}
\caption{Two equivalent ways of representing a signal propagating between a transmitter and receiver, both located next to a reflecting wall. The black solid line is the line-of-sight signal, while the blue dashed line is the non-line-of-sight signal. In both representations the dashed blue lines are of equal total length. \textbf{Left:} The diagram illustrates the signal reflecting off the wall. \textbf{Right:} The diagram uses a virtual node or \emph{image} on the other side of the wall. }\label{fig:onewall}
\end{figure}

\begin{figure}[t]
\centering
\resizebox{\tikzmedium}{!}{\begin{tikzpicture}[scale=1]
    \filldraw 
(0,0) circle (1pt) node[align=right, below] {\hspace{1em}  -$b$} 
(2,0) circle (2pt) node[align=center, below] {$O$}     
(5,0) circle (1pt) node[align=right,  below] {\hspace{-1em}$\,a$} 
(4,3) circle (2pt) node[above] {$z$} ;  
   \draw[-] (0,-1) -- (0,5); 
   \draw[line width=.5pt,interface_left] (0,-.95) -- (0,4.95); 
  \draw[-] (5,-1) -- (5,5); 
  \draw[line width=.5pt,interface_right] (5,-.95) -- (5,4.95); 
   \draw(2,0) --(4,3)  ;  
\draw[color=blue] (6,3) circle (2pt) node[above] {$z'$} ;  
\draw[color=red] (-4,3) circle (2pt) node[above] {$z''$} ;  
    \draw[dashed][color=blue]  (6,3) --(2,0); 
    \draw[dashed][color=red] (-4,3)  --(2,0); 
\end{tikzpicture}}
\caption{A receiver $O$ and a transmitter $z$ placed between two infinitely long parallel walls. The black solid line represents the line-of-sight  signal path, while the red and blue dashed lines are the respective signals from the first left and right images. The distances between receiver $O$ and the images on the right and left are denoted by $\hat{r}_1$ and $\hat{\ell}_1$ respectively.}
\label{fig:twowalls}
\end{figure}
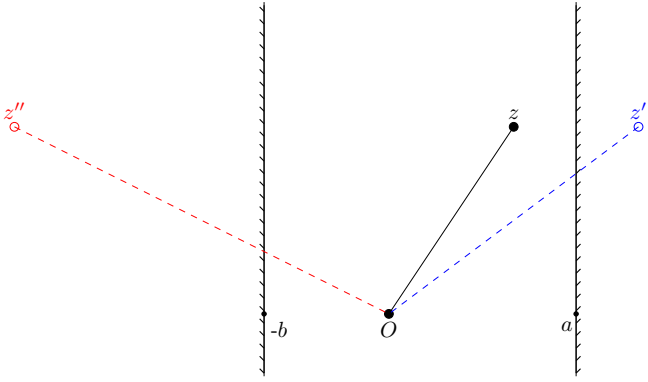



\section{Two Parallel Walls}\label{sec:Two_Walls}
We assume the receiver $O$ and a single transmitter $z$ are located between two infinitely long parallel walls; see Figure~\ref{fig:twowalls}. We again assume that a fraction $\kappa\in[0,1]$ of the transmitted signal power is reflected by the wall, while the remaining power fraction $1-\kappa$ is absorbed.

According to the method of images, as described in Section~\ref{sec:related.work}, the received signal at the receiver location $O$ is determined by an infinite number of images; also see Kyritsi's PhD thesis~\cite[Chapter~7]{kyritsi2001multiple} for details and references. In practice, however, the contributions from large-order images (or signal reflections) decay quickly due to signal absorption. The received signal at distance $r$ from $z$ is given by
\begin{align}\label{e.signalhat}
\hat{S}(r)=&\alpha(r)e^{jkr}\nonumber\\
&+\sum_{n=1}^{\infty}(-\sqrt{\kappa})^n\alpha(\hat{r}_n)e^{jk\hat{r}_n}\nonumber\\
&+\sum_{n=1}^{\infty}(-\sqrt{\kappa})^n\alpha(\hat{\ell}_n)e^{jk\hat{\ell}_n},
\end{align}
where $\hat{r}_n$ and $\hat{\ell}_n$ with $n=1,2,\ldots$ are the distances traversed by the signals due to the $n$-th right and left images, respectively. In the latter expression, again we used the fact that each reflection adds $\pi$ to the signal phase, thus inverting the wave, which explains the negative signs appearing in the signal reflection terms. For comparison purposes, we will often omit the line-of-sight term, thus expressing the signal only in terms of signal reflections, namely as
\begin{align}\label{e.signalnlos}
S(r)=\sum_{n=1}^{\infty}(-\sqrt{\kappa})^n\alpha(\hat{r}_n)e^{jk\hat{r}_n}+\sum_{n=1}^{\infty}(-\sqrt{\kappa})^n\alpha(\hat{\ell}_n)e^{jk\hat{\ell}_n}.
\end{align}

The signal's final ray coming from the left or right images gives two cases that depend on whether the ray traverses an even or an odd number of times between the two walls. For integer $n\geq 0$ and an even number of traversals, the distances of the reflected signals that come immediately from the right and left walls before arriving at $O$ are given, respectively, by
\begin{align}
\hat{r}_{n+1}&=\sqrt{(2n(a+b)+2a-x)^2+y^2},\label{e.rneven}\\
\hat{\ell}_{n+1}&=\sqrt{(2n(a+b)+2b+x)^2+y^2},\label{e.tneven}
\end{align}
where $a$ and $b$ denote the distances from $O$ to the right and left walls, respectively. For an odd number of traversals, the distances are obtained as
\begin{align}
\hat{r}_{n+1}&=\sqrt{((2n+1)(a+b)+a+b-x)^2+y^2},\label{e.rnodd}\\
\hat{\ell}_{n+1}&=\sqrt{((2n+1)(a+b)+a+b+x)^2+y^2}.\label{e.tnodd}
\end{align}

For the special symmetric case where $a=b$, and using the notation $d:=a+b$, equations~\eqref{e.rneven}--\eqref{e.tnodd} simplify to
\begin{align}
\hat{r}_{n+1}&=\sqrt{((n+1)d-x)^2+y^2},\label{e.rnsym}\\
\hat{\ell}_{n+1}&=\sqrt{((n+1)d+x)^2+y^2}.\label{e.tnsym}
\end{align}

\begin{Proposition}\label{prop:1}
For the special symmetrical case of $a=b=d/2$, the attenuation function $\alpha(r)=r^{-\beta/2}$, and transmitter at $z = (x,0)$ where $x$ is the horizontal distance from origin, the received non-line-of-sight signal at distance $r$ is given by the following closed-form expression:
\begin{align}\label{e.signallerch}
S(r)=&\frac{e^{-jkr}}{d^{\beta/2}}\left[\Phi\left(-\sqrt{\kappa}e^{jkd},\beta/2,-r/d\right)-(-d/r)^{\beta/2}\right]\nonumber\\
+&\frac{e^{+jkr}}{d^{\beta/2}}\left[\Phi\left(-\sqrt{\kappa}e^{jkd},\beta/2,+r/d\right)-(d/r)^{\beta/2}\right],
\end{align}
where $\Phi$ is the Lerch transcendent (function), defined in the DLMF~\cite[Section~25.14]{dlmf} by NIST as
\begin{align}\label{e.lerch}
\Phi(\zeta,s,\gamma):=\sum_{n=0}^{\infty}\frac{\zeta^n}{(n+\gamma)^s},
\end{align}
where $\gamma\neq 0,-1,\ldots$, $|\zeta|<1$, and $\Re\{s\}>1$.
\end{Proposition}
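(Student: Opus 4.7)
The plan is to substitute the symmetric-case distances into the non-line-of-sight series~\eqref{e.signalnlos}, factor out the index-independent terms, and recognize the remaining power series as a pair of Lerch transcendents with the missing $n=0$ terms added back as explicit corrections.

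First, I would specialize equations~\eqref{e.rnsym}--\eqref{e.tnsym} to the one-dimensional case $y = 0$. Taking the transmitter on the positive $x$-axis without loss of generality gives $r = x$, and after re-indexing by $m := n+1$ the right- and left-image distances reduce to the real positive quantities $\hat{r}_m = md - r$ and $\hat{\ell}_m = md + r$ for $m \geq 1$; positivity follows from the constraint that the transmitter lies between the walls, so that $r < d/2$.

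Next, I would substitute $\alpha(u) = u^{-\beta/2}$ into~\eqref{e.signalnlos} and peel off the index-independent factors. Splitting $e^{jk(md \mp r)} = e^{\mp jkr} e^{jkmd}$ and absorbing $e^{jkmd}$ together with $(-\sqrt{\kappa})^m$ into the single complex base $\zeta := -\sqrt{\kappa}\, e^{jkd}$, then pulling $d^{-\beta/2}$ out of each factor $(md \pm r)^{-\beta/2}$, yields
\begin{align*}
S(r) = \frac{e^{-jkr}}{d^{\beta/2}} \sum_{m=1}^{\infty} \frac{\zeta^m}{(m - r/d)^{\beta/2}} + \frac{e^{+jkr}}{d^{\beta/2}} \sum_{m=1}^{\infty} \frac{\zeta^m}{(m + r/d)^{\beta/2}}.
\end{align*}
Each remaining series is exactly the Lerch transcendent~\eqref{e.lerch} evaluated at $\gamma = \mp r/d$, except that the physical sum is missing the $n=0$ term. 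Subtracting the $n=0$ contributions, namely $(-r/d)^{-\beta/2}$ and $(r/d)^{-\beta/2}$, and rewriting the former using the reciprocal identity for the complex power gives the claimed closed form~\eqref{e.signallerch}.

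The main obstacle is not algebra but verifying convergence and handling the complex powers consistently. Since $|\zeta| = \sqrt{\kappa} \leq 1$, the two series converge absolutely whenever $\kappa < 1$, so the rearrangements carried out above are justified term-by-term; on the boundary $\kappa = 1$ one must additionally require $\Re\{\beta/2\} > 1$, i.e., $\beta > 2$, to match the DLMF conditions in~\eqref{e.lerch}. The subtracted $n=0$ terms $(-r/d)^{-\beta/2}$ and $(r/d)^{-\beta/2}$ must be interpreted via the same principal branch used to define $\Phi$ so that they are consistent with the claimed $(-d/r)^{\beta/2}$ and $(d/r)^{\beta/2}$; once this branch convention is fixed, the identity reduces to an index shift in an absolutely convergent series.
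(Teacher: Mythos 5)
Your proposal is correct and follows essentially the same route as the paper's proof: substitute the symmetric-case distances $\hat{r}_n = nd - r$ and $\hat{\ell}_n = nd + r$ into the reflection series, factor out $e^{\mp jkr}d^{-\beta/2}$ with the common base $-\sqrt{\kappa}\,e^{jkd}$, and identify each remaining sum as a Lerch transcendent with the $n=0$ term subtracted. Your added remarks on absolute convergence for $\kappa<1$ and on fixing a consistent branch for the complex power $(-d/r)^{\beta/2}$ are careful touches the paper leaves implicit, but they do not change the argument.
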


\begin{proof}
See Appendix~\ref{s.proof.prop:1}.
\end{proof}

\begin{Remark}
The convergence of the Lerch transcendent requires $\kappa < 1$, corresponding to the
physical requirement that some signal absorption occurs at each wall reflection.
The case $\kappa = 1$ (perfect reflection) lies outside the domain of convergence.

The Lerch transcendent often arises when summing fractions with power laws in the denominator. It is also known as the Hurwitz--Lerch zeta function, which is different from the Hurwitz or Lerch's zeta function. Despite such infinite sums appearing in models involving method of images, unexpectedly we did not find any relevant work that explicitly mentions this special function. 

The Lerch transcendent can be numerically evaluated in mathematical packages such as Maple or Mathematica. There is ongoing research on developing asymptotic methods and fast algorithms for evaluating these functions; see, for example, the recent paper by L{\'o}pez and Sinus{\'\i}a~\cite{lopez2025lerch}. For applications involving intelligent surfaces, efficient evaluation of the Lerch transcendent is important because real-time optimization algorithms may need to compute signal power across many spatial configurations when determining optimal phase settings.
\end{Remark}

\subsection{Received Signal Power}
In Figure~\ref{fig:powerx1}, we plot the (non-line-of-sight) signal power $P(x,y):=|S(x,y)|^2$ at the receiver location $(x,y)$ by setting $y=0$ and varying the $x$ values. We assume that the transmitting node location is fixed at the point $O$ with $a=b=0.5$, while the attenuation exponent is $\beta=4$. Recall that expressions $x=r\cos\theta$ and $y=r\sin\theta$ provide the Cartesian coordinates. As shown in Figure~\ref{fig:powerx1}, $P(x,0)$ oscillates, which contrasts with classical free-space propagation with no reflections, where the signal power decreases with increasing distance. The power oscillations are due to the existence of constructive and destructive wave interference, with the results being reminiscent of electromagnetic waves propagating through waveguides. Figure~\ref{fig:powerx2} demonstrates that the form of the power oscillations changes when $k$ increases to 100. The wave number $k$ is proportional to the signal frequency; hence, as $k$ increases, the waves oscillate more frequently. Similarly, when we fix $x=0$ for the receiver location and vary its vertical placement $y$, we see in Figure~\ref{fig:powery1} that the signal power $P(0,y)$ oscillates with varying $y$.

\begin{figure}[t]
\centering
\resizebox{\tikzmedium}{!}{\includegraphics{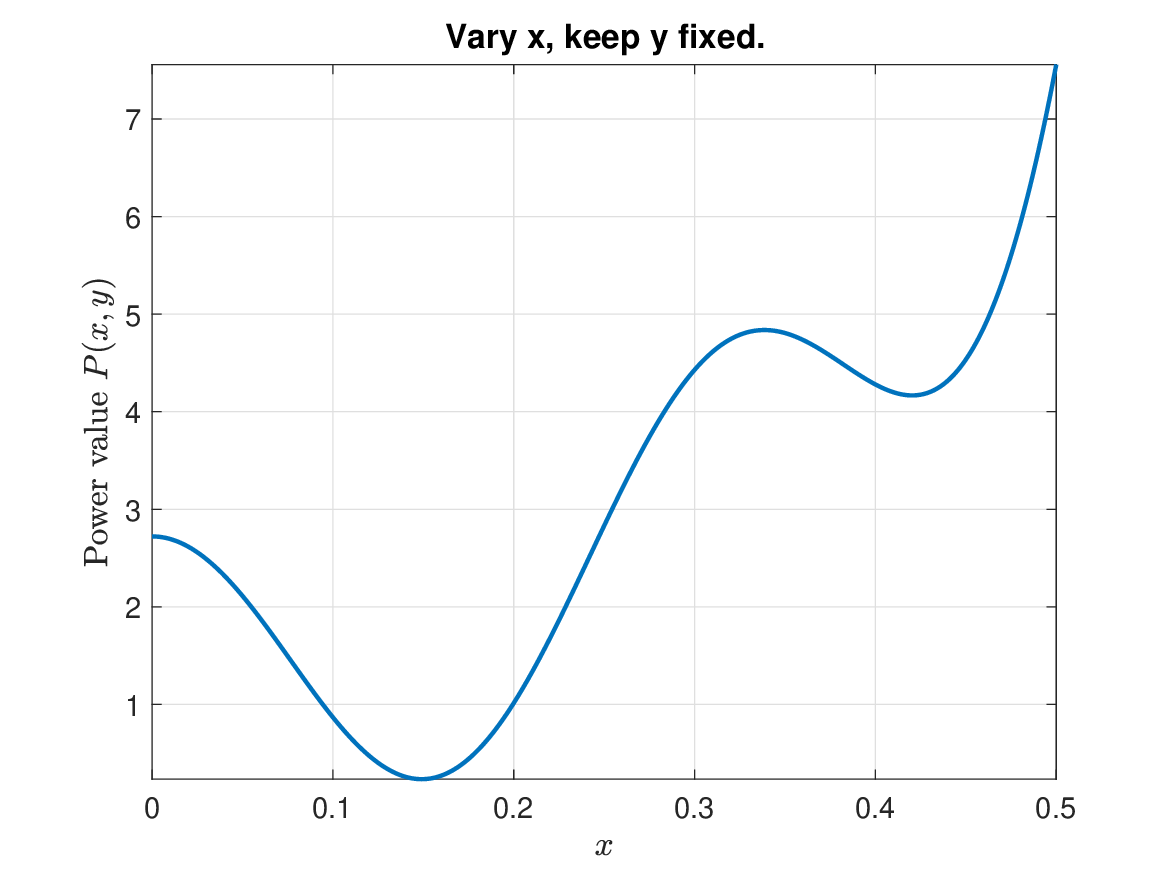}}
\caption{The (non-line-of-sight) signal power $P(x,0)$ for transmitter placement at $a=b=0.5$, attenuation exponent $\beta=4$, power reflection coefficient $\kappa=0.5$, wave number $k=10$, and different values of the horizontal receiver placement $x$.}\label{fig:powerx1}
\end{figure}

\begin{figure}[t]
\centering
\resizebox{\tikzmedium}{!}{\includegraphics{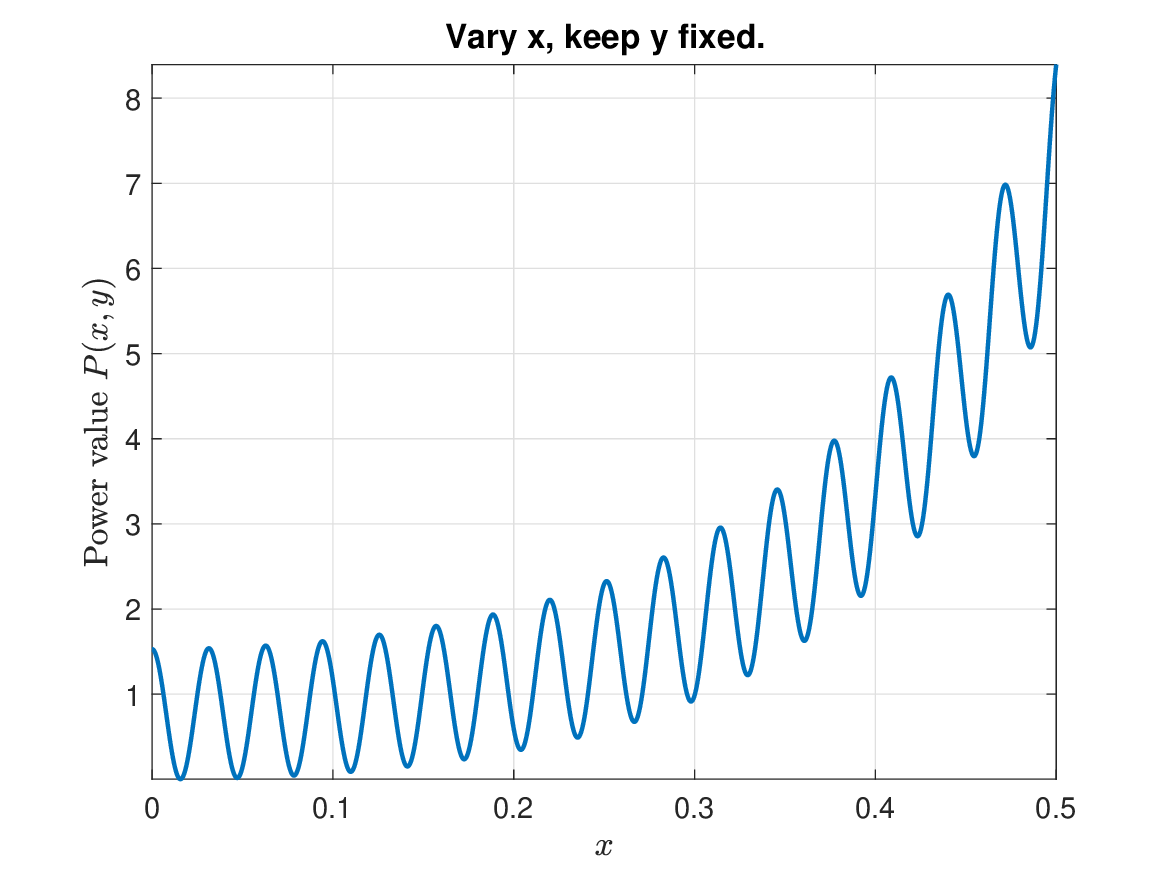}}
\caption{The signal power $P(x,0)$ for wave number $k=100$ and different values of the horizontal receiver placement $x$. The remaining parameters are the same as Figure~\ref{fig:powerx1}.}\label{fig:powerx2}
\end{figure}

\begin{figure}[t]
\centering
\resizebox{\tikzmedium}{!}{\includegraphics{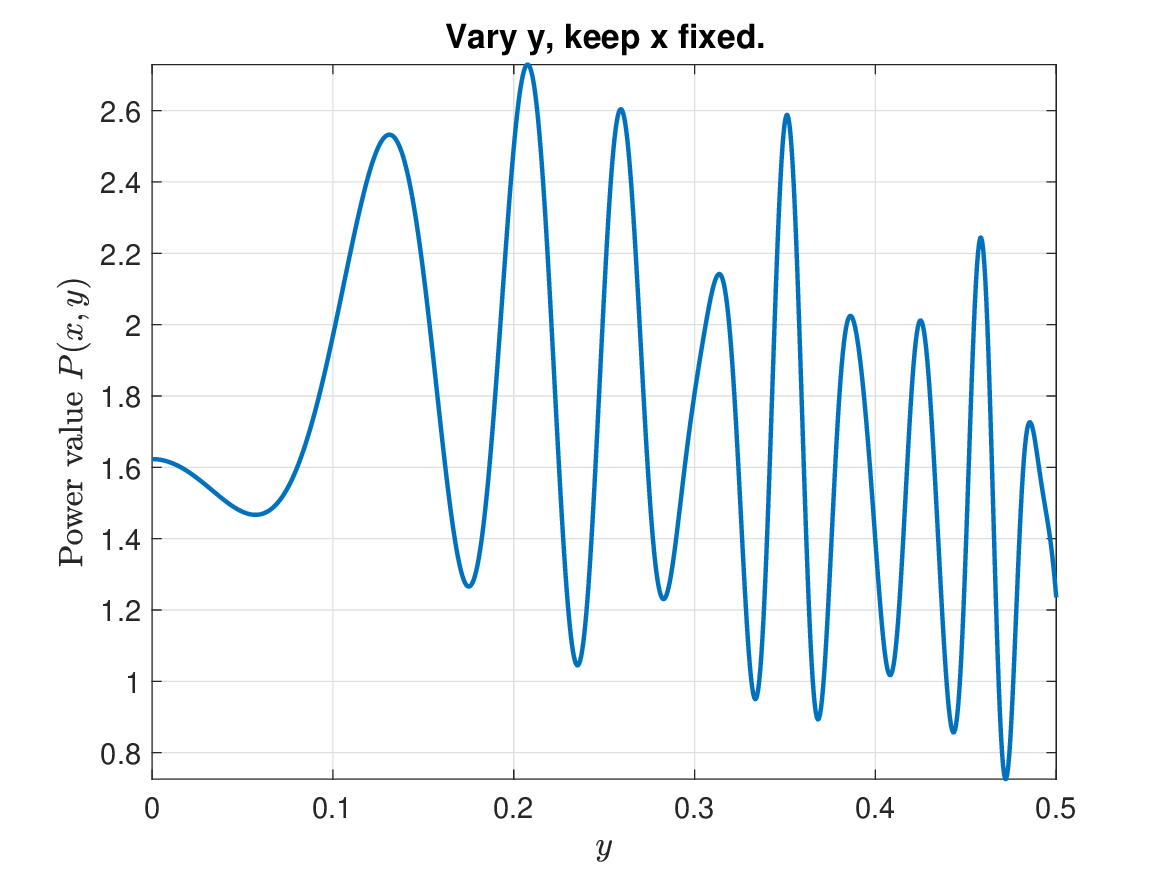}}
\caption{The signal power $P(0,y)$ for wave number $k=1000$ and different values of the horizontal receiver placement $y$. The remaining parameters are the same as Figure~\ref{fig:powerx1} and \ref{fig:powerx2}}\label{fig:powery1}
\end{figure}

\subsection{Intelligent Walls}
We now consider the case where the walls are coated with an intelligent surface. 
We assume an intelligent surface can adjust the phase of reflected signals by controlling the electromagnetic properties of the reflecting surface. Specifically, these idealized surfaces can:
\begin{itemize}
\item Remove the $\pi$ phase shift that normally occurs upon reflection
\item Add additional phase shifts to align reflected signals constructively
\item Dynamically tune the total phase to maximize received signal strength
\end{itemize}

In the optimal case, intelligent surfaces align all signal components---line-of-sight and all reflections---to interfere constructively with a common phase. If intelligent surfaces eliminate the $\pi$ phase shift from each reflection, the $(-\kappa)^n$ terms become $(\kappa)^n$ terms in the signal expression~\eqref{e.signalhat}.
However, the phases $kr$, $k\hat{r}_n$, and $k\hat{\ell}_n$ generally differ due to different path lengths, so the terms still do not align perfectly. In an idealized scenario where intelligent surfaces possess complete knowledge of all signal paths and can apply arbitrary path-dependent phase 
adjustments---a theoretical upper bound on performance rather than a 
practically achievable configuration---all signal components could be aligned 
to a common phase $e^{j \theta_0}$, yielding
\begin{align}\label{e.intelligentbound}
\hat{S}_0(r) := &\,e^{j\theta_0} \nonumber \\
 &\times \left[ \alpha(r) + \sum_{n=1}^{\infty} (\sqrt{\kappa})^n \alpha(\hat{r}_n) + \sum_{n=1}^{\infty} (\sqrt{\kappa})^n \alpha(\hat{\ell}_n) \right]
\end{align}

This upper bound assumes capabilities beyond current intelligent surfaces technology, 
serving primarily as a theoretical performance benchmark. Noting this, taking the magnitude of both sides and squaring of the signal expression~\eqref{e.intelligentbound} gives an upper bound for the signal power $\hat{P}(r)$, namely
\begin{align}
\hat{P}_0(r) &=\left[ \alpha(r) + \sum_{n=1}^{\infty} (\sqrt{\kappa})^n \alpha(\hat{r}_n) + \sum_{n=1}^{\infty} (\sqrt{\kappa})^n \alpha(\hat{\ell}_n) \right]^2\\
&\geq \hat{P}(r) \,.
\end{align}
This upper bound represents the maximum achievable signal power when intelligent surfaces eliminate destructive interference completely, serving as a benchmark for evaluating the performance of practical intelligent surface implementations.  

\subsection{Implications for Intelligent Surfaces}
When both walls are equipped with intelligent surfaces, the phase control capabilities offer opportunities to reshape the signal power function $P(x,y)$. Understanding the turning points becomes critical because:
\begin{itemize}
\item The singularities in probability densities occur at predictable locations determined by geometry.
\item Optimization algorithms for intelligent surfaces must account for these singularities when deployed in corridors or urban canyons.
\item Phase adjustments can potentially smooth the power function, reducing or eliminating turning points.
\end{itemize}

\subsection{Model Limitations}
Our two-wall propagation model incorporates several simplifying physical assumptions that merit discussion. The model assumes specular (mirror-like) reflection with an angular-independent reflection coefficient $\kappa$, neglecting diffuse scattering and angle-dependent reflection that occur in practice. We employ a far-field approximation where the phase is approximately uniform across the wall surface, which holds when distances are large compared to wall separation ($r \gg d$). The model treats electromagnetic waves as scalar quantities, omitting polarization effects and their interaction with wall materials. Additionally, we assume infinite parallel walls and a single-slope attenuation model with constant exponent $\beta$.

Despite these idealizations, the fundamental phenomenon we establish---that random spatial sampling of oscillatory signal power creates probability density singularities at turning points---is robust to these approximations. The singularities arise from the geometric relationship between transmitter location and multipath interference, not from the specific details of wave propagation. More sophisticated models incorporating near-field effects, angular-dependent reflection, or polarization would modify the detailed form of the power function $P(x,y)$ but would not eliminate the turning points or their associated singularities, provided constructive and destructive interference remains present. This robustness makes our results applicable to practical wireless systems where the idealized assumptions are only approximately satisfied.

\section{Random Geometrical Signal Models}\label{sec:Random_Model}
We consider natural ways to introduce randomness into our deterministic model. We examine two randomization approaches: random spatial positioning of the transmitter (Section~\ref{sec:Random_Model}A), which captures geometric variations, and random phase variation (Section~\ref{sec:Random_Model}B), which provides a more mathematically tractable alternative. We start with the first approach, which we already started discussing  in Section~\ref{sec:Turning}. 


\subsection{Random location model}
A natural way to introduce randomness into our wave-based model is to randomly position the transmitter, which can be justified by considering placements, movements and other uncertainties of transmitters. In Cartesian coordinates, the transmitter is located at $(x,y)=(r\cos\theta,r\sin\theta)$. If the transmitter is perturbed by a random vector $\mathbf{N}=(N_x,N_y)$, where the vector components are two random variables $N_x$ and $N_y$ symmetrical around zero (so they have zero mean) with variance $\sigma^2$, then the transmitter is now located at $(X',Y')=(x+N_x,y+N_y)$.

The random variables $N_x$ and $N_y$ may be, for example, two independent uniform or normal random variables with variance $\sigma^2$, where a natural choice for $\sigma$ would be a quantity proportional to the signal wavelength $\lambda$. In practice, we found it easier to randomly locate the transmitter using uniform random variables, which have bounded support. Otherwise, it is possible that the random perturbations are too large, sending the transmitter to the wrong side of the wall.

Under the random location model, the random perturbations will of course alter the distances of the rays, which we write as $\hat{R}_n$ and $\hat{L}_n$. In the symmetrical case when $a=b$, these random variables are simply
\begin{align}
\hat{R}_n&=\sqrt{(nd-x+N_x)^2+(y+N_y)^2},\label{e.R}\\
\hat{L}_n&=\sqrt{(nd+x+N_x)^2+(y+N_y)^2}.\label{e.L}
\end{align}

\subsubsection{Randomly Sampling the Signal} 
To better understand the random location model, we examine the situation when we just randomly vary $x$ in a uniform manner and keep $y$ fixed. This is equivalent to sampling the non-line-of-sight signal function $S(X,y)$, given by equation~\eqref{e.signallerch}, for a randomly uniform $X$. We can then estimate the density of the (probability) distribution $\P(P(X,y)\leq s)$ by performing an ordinary empirical count. (Based on our experiences here, standard kernel-based density estimation methods seem to fail at capturing the subtleties of the empirical probability density; see Remark~\ref{rem:density_est} below for details.) For each histogram, we took about $10^5$ samples with $200$ bins.  

\subsubsection{Peaks in the Probability Density} 
We now apply the mathematical framework developed in Section~\ref{sec:Turning} to our specific propagation model. As established there, turning points in the signal power function $P$ create singularities in the probability density under random spatial sampling. These peaks can be predicted from the power expression $P(x,y)=|S(x,y)|^2$. 

For fixed $y=0$ and varying $x\in(x_m,x_{i+1})$ on an interval where $P(x,0)$ is strictly monotonic, we can apply Proposition~\ref{prop:monotonic}. The constructive and destructive wave interference in our two-wall model creates turning points where $P'(x,0)=0$, which generate the singularities. Equation~\eqref{e.densitymonotonic} reveals that significant peaks can occur in the probability density even near (but not at) turning points where the derivative $P'(x)$ approaches zero.


\begin{Remark}
For environments with intelligent surfaces, the ability to predict turning point locations from geometry enables targeted phase optimization. Rather than treating fading as purely random, controllers for intelligent surfaces can leverage knowledge of these geometric singularities to improve signal reliability in specific spatial regions.
\end{Remark}

Constructive and destructive interference creates the turning points in the (non-random) signal power, which then give peaks in the probability density of the (random) signal power.

\subsubsection{Results}\label{sec:results.loc} 
Our results support our mathematical argument about singularities, with the values of the turning points corresponding to the peaks in the empirical density estimates. More specifically, for the first (or top) plot in Figure~\ref{PPDFX2}, we have again plotted $P(x,y)$ as we vary $x$, as we did for Figure~\ref{fig:powerx2}, but for a smaller $x$-domain. For the second (or bottom) plot in Figure~\ref{PPDFX2}, we have sampled uniformly $X$ on the interval $(0.15,0.35)$ and empirically estimated the probability density of the power term $P(X,y)$.

The power values where peaks in the probability density occur can be predicted, as they correspond to the turning points of the power term $P(x,y)$. We indicated the turning points and singularities respectively with red dots and red dashed lines in the two plots presented in Figure~\ref{PPDFX2}. We see in Figure~\ref{PPDFY3} that similar peaks appear when we randomly sample $Y$, keep $x$ fixed, and plot the empirical density of $P(x,Y)$.

If we either decrease the wave number $k$ or the interval or region of the transmitter location, then there will be fewer turning points and, consequently, fewer peaks in the probability density. Furthermore, if two (or more) turning points take the same power value, then they will appear as one peak in the empirical density, as we explained previously in Section~\ref{sec:multiple_turning_points}. Under our model, this happens if we fix $x$ and vary $y$ across a suitable interval, as there is symmetry in $y$ around $y=0$. The results of such an example are illustrated in Figure~\ref{PPDFY2A}, where we see there are three turning points in the power value, but only two peaks in the empirical density, as two turning points coincide in power value. 

On the other hand, even if the random location sampling does not cover a turning point in the power, its effects can still appear in the empirical probability density. We see this in Figure~\ref{PPDFY2B}, which we produced with the same parameters as Figure~\ref{PPDFY2A} but sampled on a smaller interval. The two turning points visible in Figure~\ref{PPDFY2B} produced two peaks in the empirical probability density, but there is a third peak (on the far right), which stems from the turning point in the power at $y=0$, which falls outside of the sample interval $(0.1,0.6)$.

Finally, we randomly translate in both Cartesian directions. We see in Figure~\ref{PPDFXY1} that the peaks still appear, but they are less significant, when we randomly sample both $X$ and $Y$. Note that it is somewhat difficult to obtain similar peaks in the empirical probability density, as the $x$ and $y$ variables work on different scales.

\begin{figure}[t]
\centering
\resizebox{\tikzlarge}{!}{\includegraphics{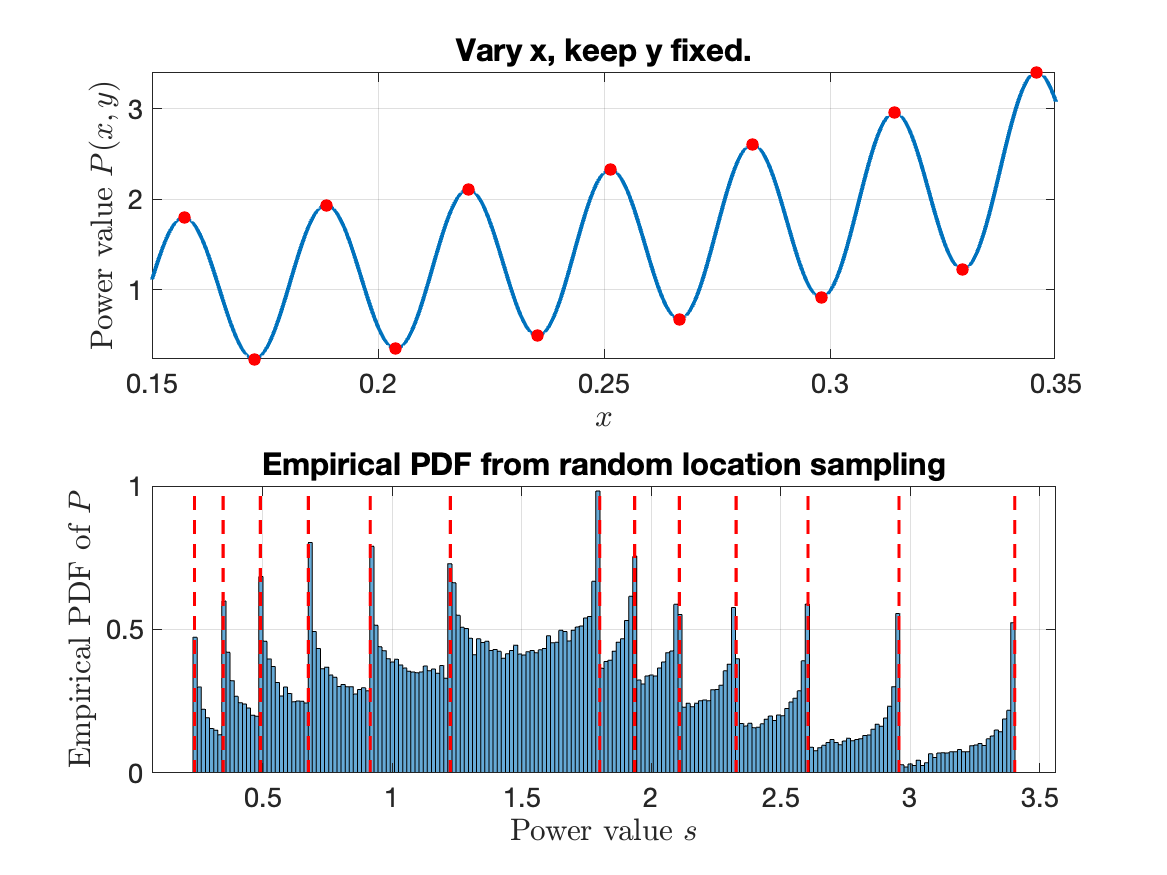}}
\caption{\textbf{Top:} The signal power $P(x,y)$ as $x$ varies. \textbf{Bottom:} Empirical probability density of $P(X,y)$ with varying $x\in(0.15,0.35)$ and fixed $y=0$. Thirteen turning points in the signal power produce thirteen peaks in its empirical probability density. Parameters $a=b=0.5$, $\beta=4$, $\kappa=0.5$, and $k=10^2$.}\label{PPDFX2}
\end{figure}

\begin{figure}[t]
\centering
\resizebox{\tikzlarge}{!}{\includegraphics{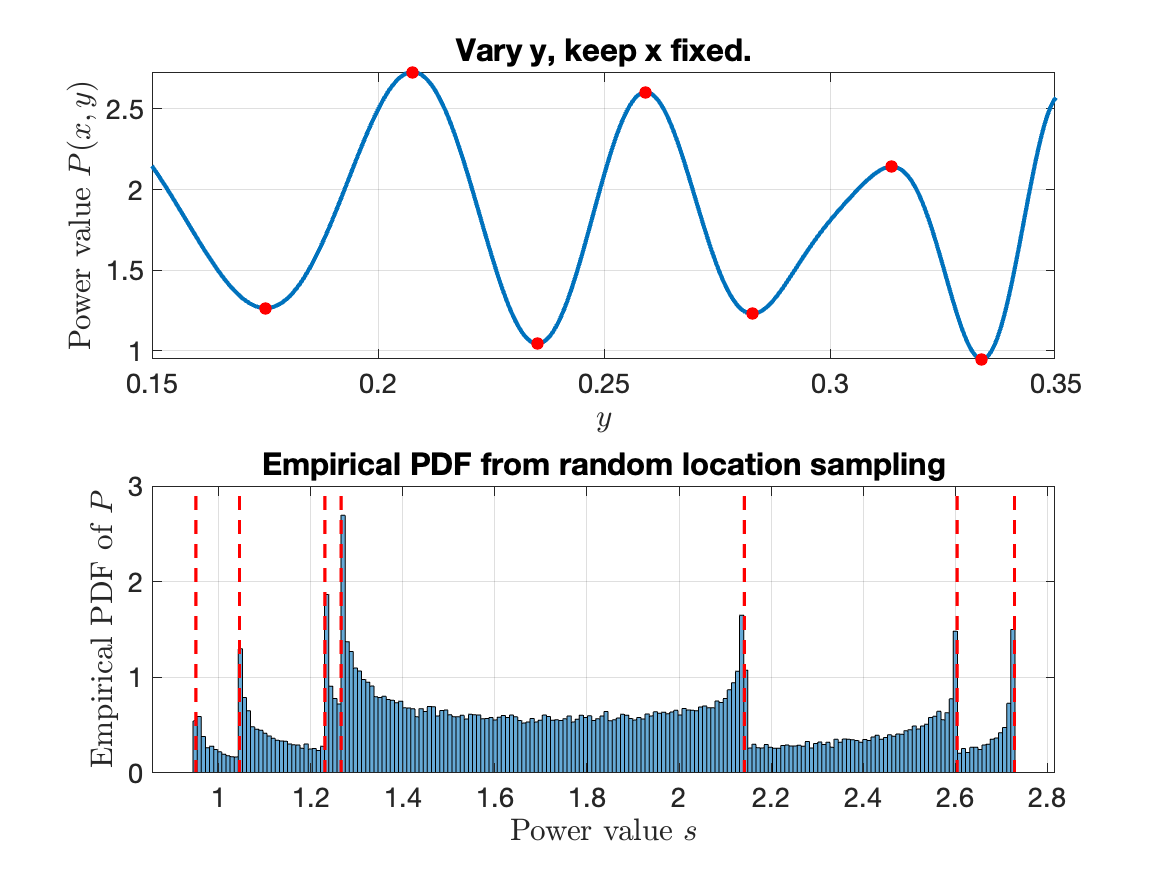}}
\caption{\textbf{Top:} The signal power $P(x,y)$ as $y$ varies.  \textbf{Bottom:} Empirical probability density of $P(x,Y)$ with varying $y\in(0.15,0.35)$ and fixed $x=0$. 
Parameters $a=b=0.5$, $\beta=4$, $\kappa=0.5$, and $k=10^3$.}\label{PPDFY3}
\end{figure} 

\begin{figure}[t]
\centering
\resizebox{\tikzlarge}{!}{\includegraphics{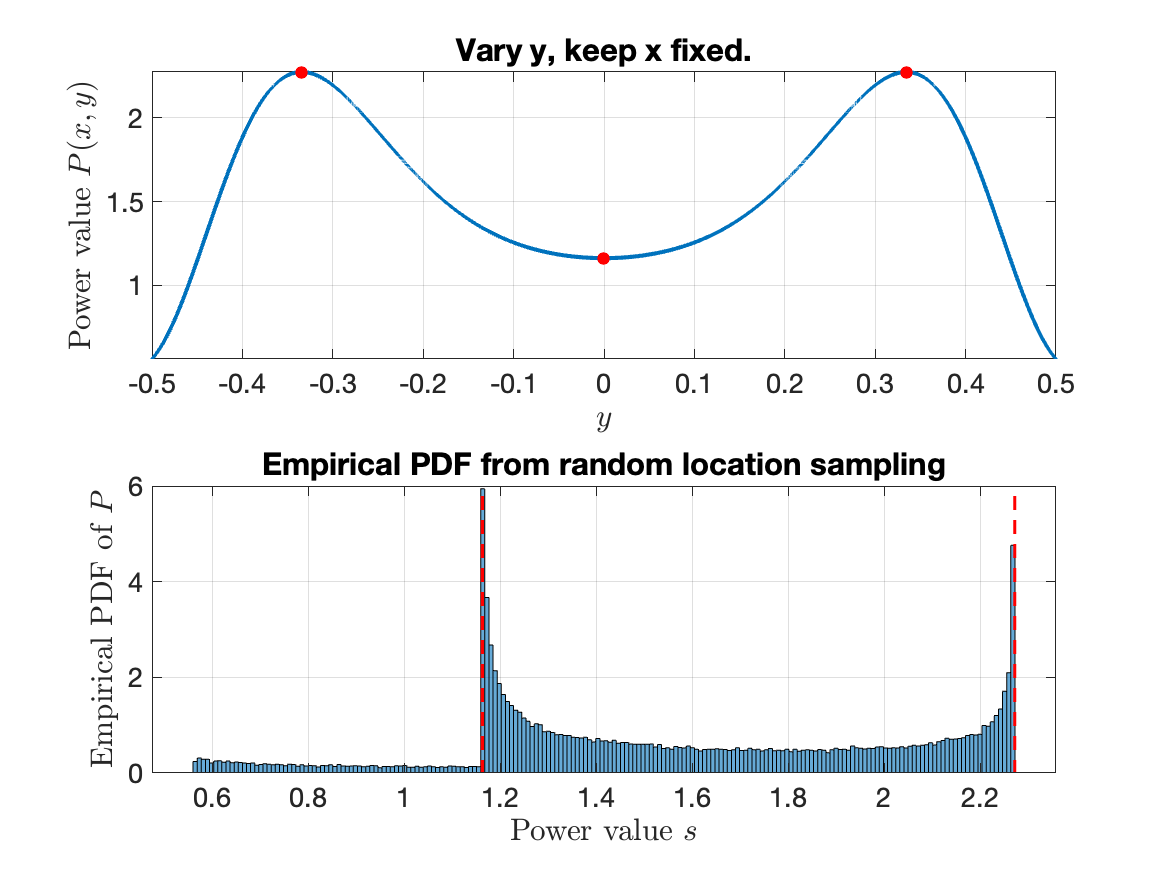}}
\caption{\textbf{Top:} The signal power $P(x,y)$ as $y$ varies. \textbf{Bottom:} Empirical probability density of $P(x,Y)$ with varying $y\in(-0.5,0.5)$ and fixed $x=0.1$. 
The three turning points in the power create only two peaks in its empirical probability density, because two of the turning points have the same power value, collapsing into a single peak. 
Parameters $a=b=0.5$, $\beta=4$, $\kappa=0.5$, and $k=10^2$.}\label{PPDFY2A}
\end{figure} 

\begin{figure}[t]
\centering
\resizebox{\tikzlarge}{!}{\includegraphics{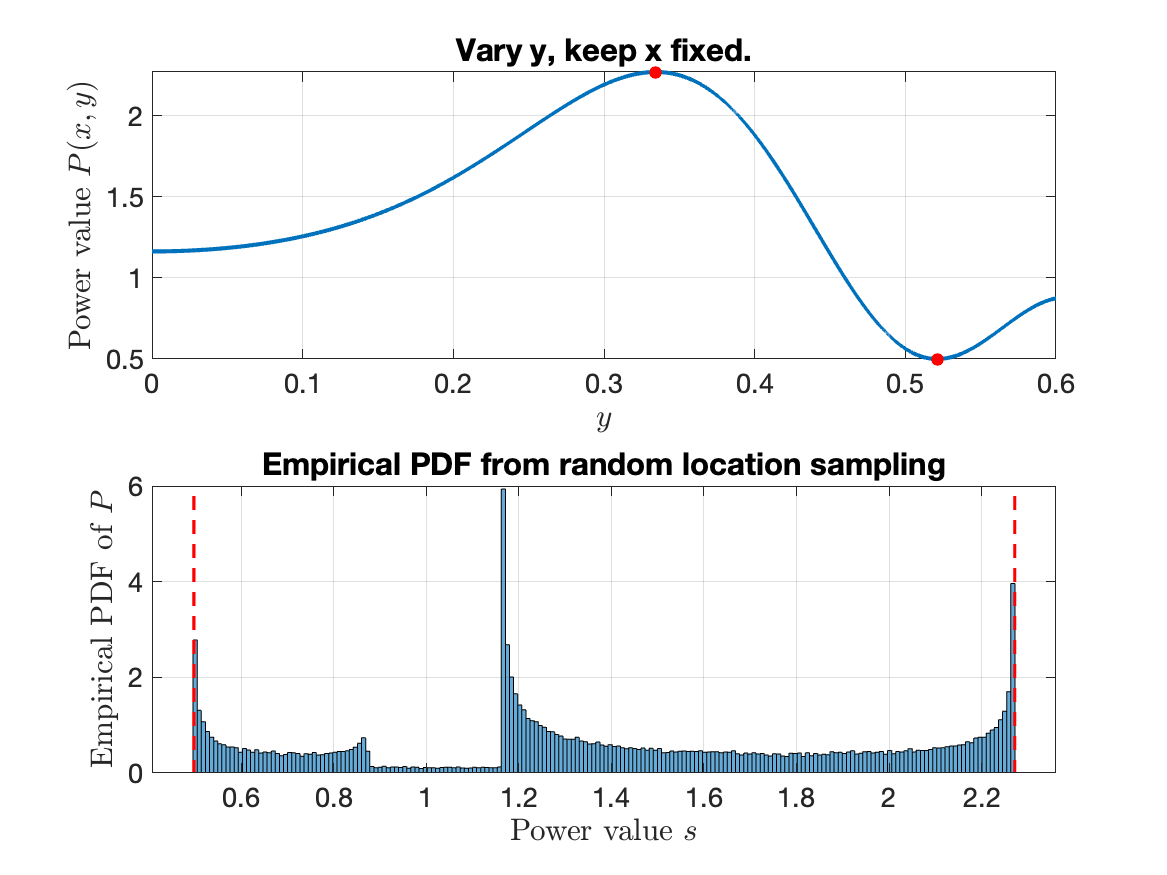}}
\caption{\textbf{Top:} The signal power $P(x,y)$ as $y$ varies. Only two true turning points of signal power $P(x,y)$ are visible, but a third turning point almost appears in the signal power on the far left with a power value of slightly below $1.2$.  \textbf{Bottom:} Empirical probability density of $P(x,Y)$ with varying $y\in(0,0.6)$ and fixed $x=0.1$.  There are three peaks in the empirical probability density. But the third peak (in the middle) is not a true singularity. It appears because the sampling nears the truncated third turning point of the signal power $P(x,y)$. Parameters $a=b=0.5$, $\beta=4$, $\kappa=0.5$, and $k=10^2$.}\label{PPDFY2B}
\end{figure}

\begin{figure}[t]
\centering
\resizebox{\tikzlarge}{!}{\includegraphics{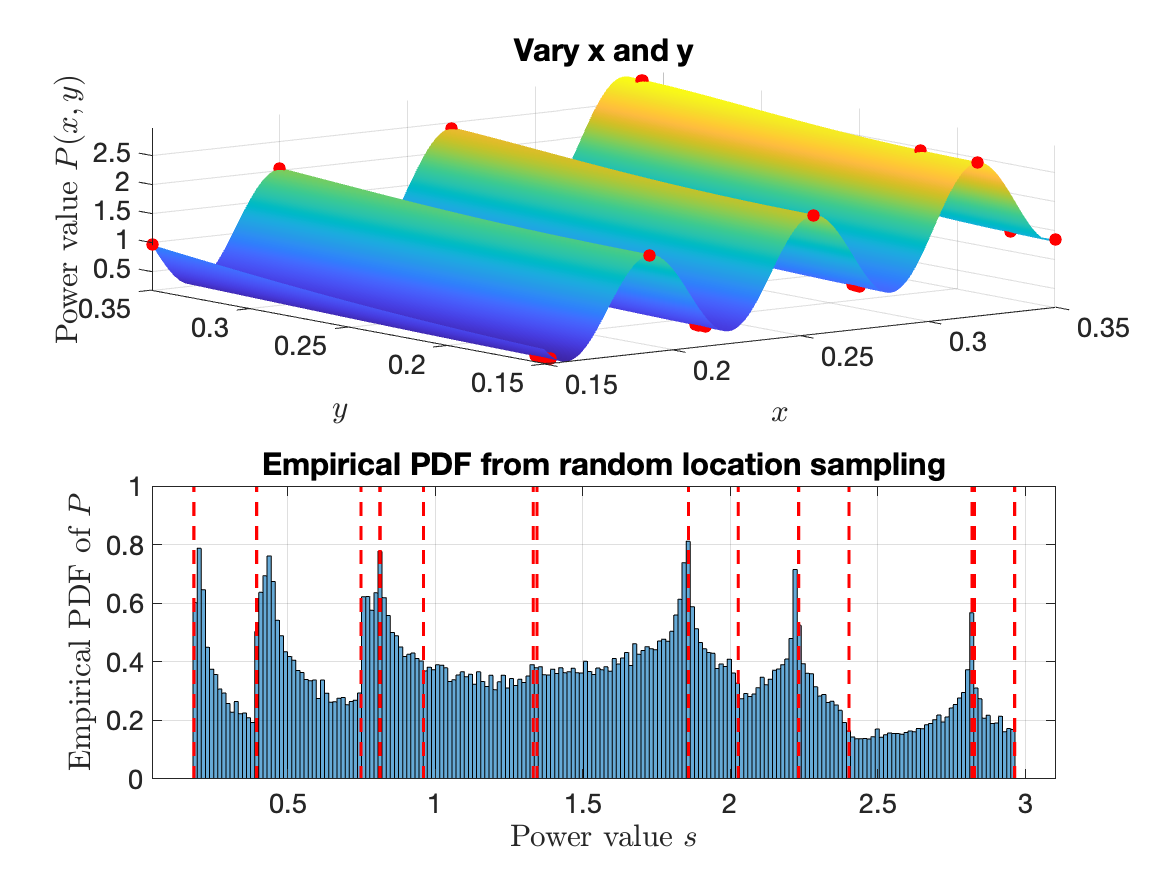}}
\caption{\textbf{Top:} The power of the signal term $P(x,y)$ as $x$ and $y$ both vary. \textbf{Bottom:} Empirical probability density of $P(X,Y)$ with varying $x\in(0.05,0.45)$ and varying $y\in(-0.2,0.2)$. Parameters $a=b=0.5$, $\beta=4$, $\kappa=0.5$, and $k=10$.}\label{PPDFXY1}
\end{figure} 

\begin{Remark}[Challenges for density estimation]\label{rem:density_est}
The singularities in our probability densities create fundamental challenges for standard statistical density estimation methods. Classical kernel density estimators assume smooth, differentiable densities, and the  asymptotic theory breaks down at discontinuities~\cite{silverman1986density,wand1994kernel,scott2015multivariate}. The convergence rates typically quoted for kernel method do not apply near singularities where the density is unbounded~\cite[Section 3.2.7.2 and 6.2.3.5]{scott2015multivariate}.


For densities with discontinuities or unbounded values, alternative approaches exist in the statistical literature, including methods based on transformation~\cite{wand1991transformations} or boundary kernels~\cite{muller1991smooth}. However, these typically address boundary problems or jump discontinuities, not the inverse square-root singularities arising in our model. Our histogram-based empirical approach avoids these pitfalls, though it introduces its own bias-variance trade-offs through bin width selection.
\end{Remark}

\subsection{Random Phase Model}
We want to propose another model, which is mathematically more tractable while still capturing the induced randomness of the previous model. For such a random model, we consider the phase terms in the two sums from the right and left set of images. We now assume that the phases of the right and left images, namely $k\hat{r}_n$ and $k\hat{\ell}_n$, (modulo $2\pi$) behave like coupled pairs of random variables $(\hat{U}_n,\hat{V}_n)$, which means the non-line-of-sight term becomes
\begin{align}
S(r)&=\sum_{n=1}^{\infty}\frac{(-\sqrt{\kappa})^n}{\hat{r}_n^{\beta/2}}e^{j\hat{U}_n}+\sum_{n=1}^{\infty}\frac{(-\sqrt{\kappa})^n}{\hat{\ell}_n^{\beta/2}}e^{j\hat{V}_n}.
\end{align}

We want to know if it is possible that this random phase model can mimic the random translation model above, which has the (random) phases $k\hat{R}_n$ and $k\hat{L}_n$, where we recall $\hat{R}_n$ and $\hat{L}_n$ are given by expressions~\eqref{e.R} and~\eqref{e.L}.

\subsubsection{Uniform Phases}\label{sec:uniform_phases}
A simple model is to assume that all the random pairs $(\hat{U}_n,\hat{V}_n)$ are equal in distribution to some single pair $(\hat{U},\hat{V})$. We then set $\hat{U}$ and $\hat{V}$ to be uniformly distributed variables on the interval $(0,2\pi)$.

There are two arguments for having uniform phases. First, under the random location model, if the variance of the random perturbation is large enough so that the resulting phases $k\hat{R}_n$ and $k\hat{L}_n$ are sufficiently large (greater than $2\pi$) and small (less than $0$), then the phases (modulo $2\pi$) will converge in distribution to that of the uniform variable on $(0,2\pi)$. For example, assume under the random location model that a random phase value $k\hat{R}_n$ is equal in distribution to a normal variable $Z$ with mean $\pi$ and variance $\sigma^2$; then
\[
Z\bmod(2\pi)\rightarrow\hat{U}\text{ as }\sigma\rightarrow\infty,
\]
which gives us the first reason for using uniform phases in our model.

The second reason is more technical, interpreting  the model as a discrete-time deterministic dynamical system. Each set of reflections of the $n$-th image represents a time step in this system. Then the phases of the right and left images, $k\hat{r}_n$ and $k\hat{\ell}_n$, form a dynamical system. The uniform distribution is the natural choice for such dynamical systems with the modulus operator, due to the trigonometric functions, evoking the \emph{equidistribution theorem}; see, for example, the book by Hasselblatt and Katok~\cite[Section~4.1.4]{hasselblatt2003first}. In less formal language, as the dynamical system evolves with more and more reflections occurring, the two phases $k\hat{r}_n$ and $k\hat{\ell}_n$ (modulo $2\pi$) are two points moving on the circle, eventually exploring the entire circle in a uniform manner.

\subsubsection{Results}\label{sec:results.phase}
We now present numerical results that validate our theoretical findings regarding random fading models and signal power distributions.

As explained in Section~\ref{sec:uniform_phases}, when random perturbations are sufficiently large in the random location model, the resulting random phases of all right and left images behave approximately uniformly on $[0,2\pi]$. However, for each pair of left and right images, the corresponding random phases remain dependent on each other. In other words, for any pair of images, the right and left phases marginally behave like uniform variables, but the entire signal structure maintains internal dependencies.

The marginal distributions of the phases in the random location model closely resemble uniform distributions, but they are not independent. This correlation structure distinguishes the random location model from simpler models that assume complete independence. In contrast, under the far-field approximation model, the random phases can be treated as independent, which significantly simplifies analysis but may not capture the full complexity of the physical propagation environment.

Our numerical simulations confirm that the peaks in the probability density of signal power, as predicted by our analysis in Section~\ref{sec:Random_Model}, align precisely with the turning points of the deterministic signal power function. This validates our theoretical framework connecting wave interference patterns to statistical properties of random fading.

We compare the empirical probability densities under the random location and random phase models in Figures~\ref{PhaseLocPDFX2} and~\ref{PhaseLocPDFX3}. For $k=10^2$ (Figure~\ref{PhaseLocPDFX2}), the random location model exhibits multiple sharp peaks corresponding to turning points in the deterministic power function, as predicted by our theoretical analysis. In contrast, the random phase model produces a smooth, unimodal density without these singularities. The random phase model's density is approximately symmetric and resembles classical fading distributions like Rayleigh, lacking the geometric structure that creates peaks in the random location model.

At higher frequencies with $k=10^3$ (Figure~\ref{PhaseLocPDFX3}), the distinction becomes more pronounced. The random location model maintains its characteristic peaks at turning point values, though the increased frequency creates more oscillations in the underlying power function. The random phase model continues to produce a smooth density, demonstrating that uniform phase randomization fundamentally differs from spatial randomization. This comparison confirms that the singularities in probability densities arise specifically from the geometric coupling between random spatial sampling and deterministic signal oscillations, not merely from phase randomness.

\begin{figure}[t]
\centering
\resizebox{\tikzmedium}{!}{\includegraphics{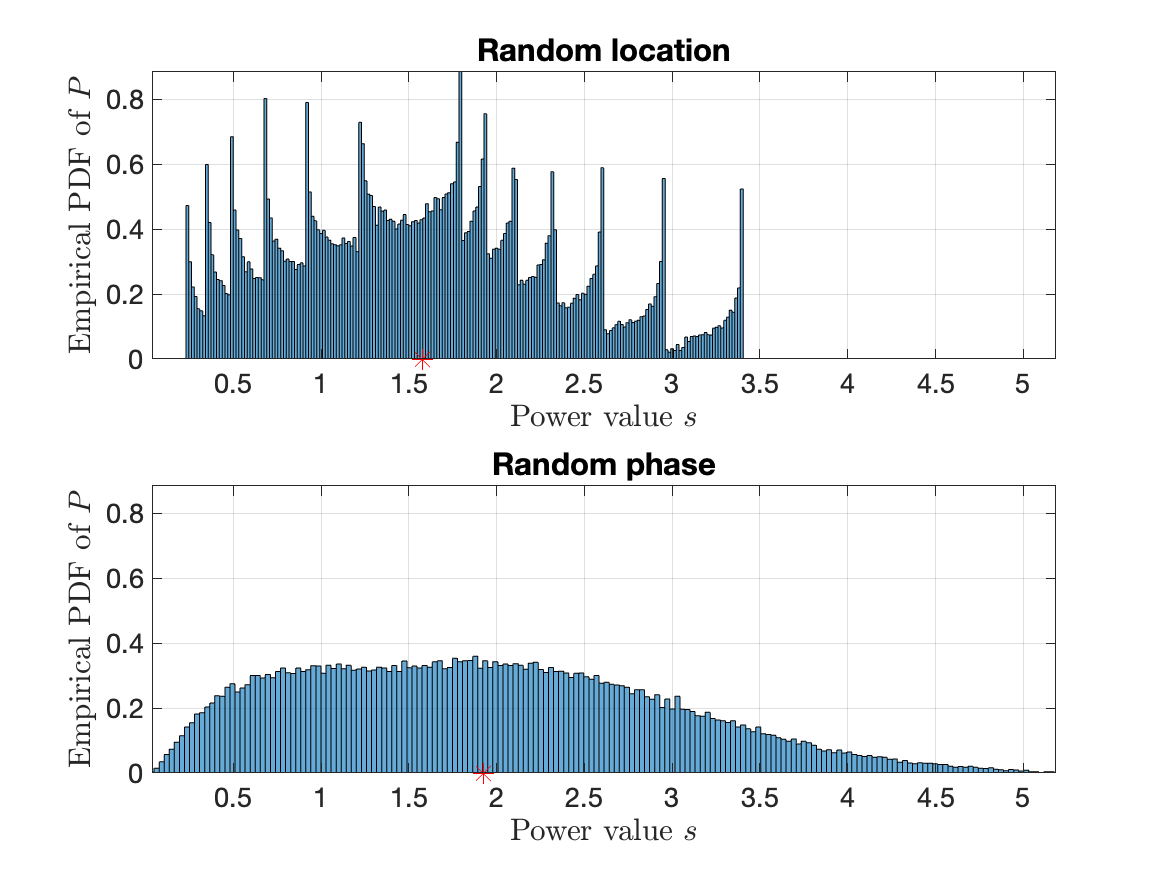}}
\caption{At lower wave number values, the two models give rather different results.
\textbf{Top:} Empirical probability density of $P(x,Y)$ under random location model: Varying $x\in(0.15,0.35)$ and fixed $y=0$. 
\textbf{Bottom:} Empirical probability density of $P(x,Y)$ under random phase model: Fixed $x=0.25$ and $y=0$. 
Parameters $a=b=0.5$, $\beta=4$, $\kappa=0.5$, and $k=10^2$.}\label{PhaseLocPDFX2}
\end{figure}

\begin{figure}[t]
\centering
\resizebox{\tikzmedium}{!}{\includegraphics{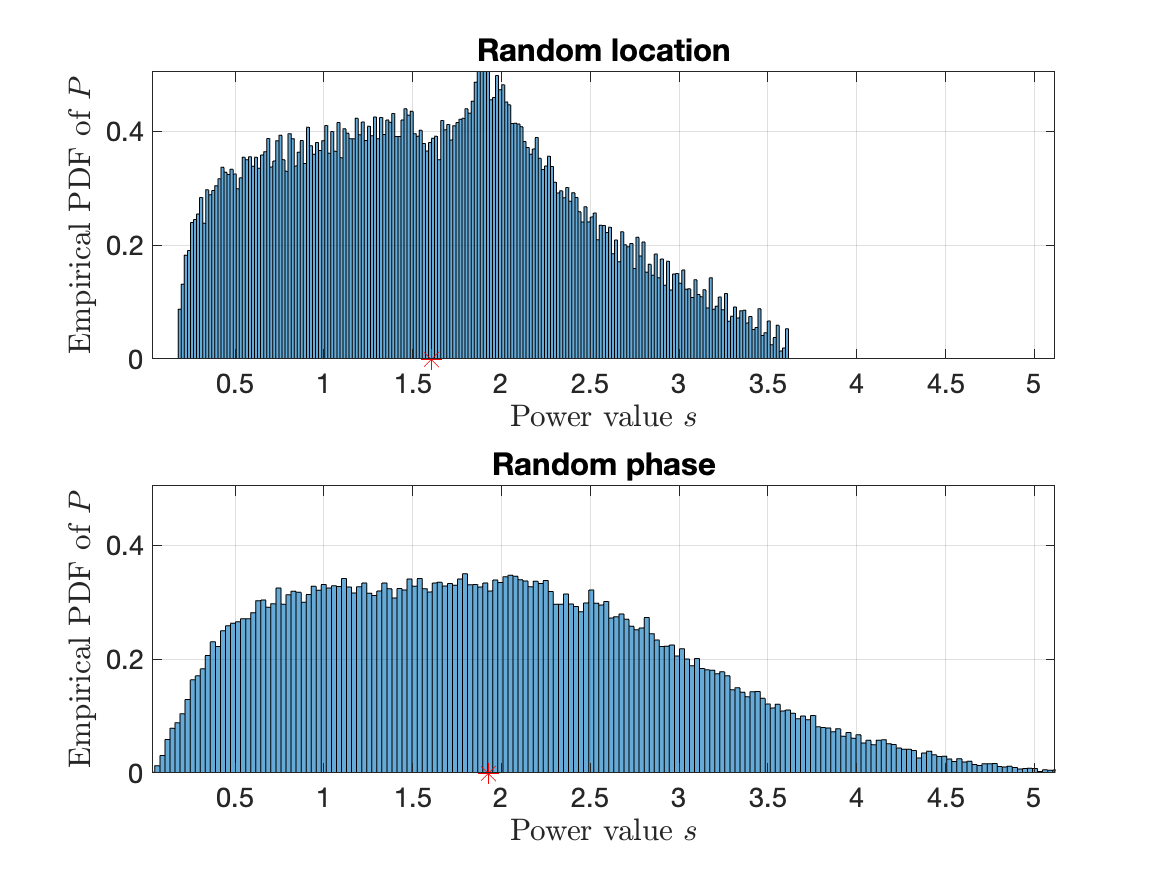}}
\caption{At higher wave number values, the two models give similar results.
\textbf{Top:} Empirical probability density of $P(x,Y)$ under random location model: Varying $x\in(0.15,0.35)$ and fixed $y=0$. 
\textbf{Bottom:} Empirical probability density of $P(x,Y)$ under random phase model: Fixed $x=0.25$ and $y=0$. 
Parameters $a=b=0.5$, $\beta=4$, $\kappa=0.5$, and $k=10^3$.}\label{PhaseLocPDFX3}
\end{figure}


\section{Conclusion}\label{sec:Conclusion}
Using probability fundamentals, we showed that randomly moving a transmitter can result in peaks or rather singularities in the probability density of the (randomized) received signal power, provided there are turning points in the distance-based propagation model. This has direct implications for how to model and statistically study wireless signals. To apply these findings, we have developed a physical-statistical model for signal propagation between parallel walls using the method of images. Our approach bridges the gap between purely deterministic ray-tracing methods and purely statistical fading models by incorporating geometric structure directly into the stochastic framework.

For the symmetric two-wall configuration, we derived a closed-form expression for the received signal using the Lerch transcendent function, which reveals the oscillatory nature of signal power due to constructive and destructive interference. This analytical result provides insight into waveguide-like behavior in cities that standard free-space propagation models cannot capture.

Under the random location model, we demonstrated both theoretically and numerically that oscillations in signal power generate singularities in the probability density function. These singularities occur precisely at the turning points of the deterministic signal power function, which can be predicted from the geometry of the configuration. For low wave numbers or frequencies, these peaks become particularly prominent, representing a regime where simplified random phase models fail to capture the underlying physical phenomena.

Our work establishes a foundation for understanding signal fading in environments with intelligent surfaces. The predictability of probability density singularities from geometric configuration provides a structural principle for optimization of intelligent surfaces: rather than treating all spatial locations equivalently, network operators can account for the known locations of turning points and resulting singularities. The geometric approach we have developed can accommodate phase-tunable walls and provides a framework for analyzing how such surfaces can mitigate destructive interference while accounting for the statistical properties induced by random user positioning. Future work can extend these results to more complex geometries and incorporate the full capabilities of intelligent surfaces into the stochastic geometry framework.

\section*{Data Availability}
The MATLAB code used to generate all numerical results and figures in this paper is available online~\cite{keeler_walls_fading}.

\appendices
\section{Proof of Proposition~\ref{prop:monotonic}}\label{s.proof.prop:monotonic}
Standard probability techniques gives us the probability distribution of the random variable $V_i=g_i(U)$ precisely because $g_j$ is a one-to-one function. (The use of this crucial assumption on the function is typically taught in probability courses, but later we will need to consider many-to-one functions.) We then derive the probability density by using the chain rule and a general formula for the derivative of an inverse function. If a continuous function $f$ is monotonic on the interval $(a,b)$, then the derivative of its inverse $f^{-1}$ is given by 
\begin{align}\label{e.inversederiv}
\frac{df^{-1}(v)}{dv}=\frac{1}{f'(f^{-1}(v))}, \quad v\in(a,b)\,,
\end{align}
which is closely related to a result sometimes called the inverse function theorem; see, for example, Tao~\cite[Section~10.4]{tao2016analysis}. Formula~\eqref{e.inversederiv} is also presented in classic texts on analysis such as Spivak~\cite[page~34]{spivak1965calculus} or Rudin~\cite[page~114]{rudin1964principles}.

\section{Proof of Proposition~\ref{prop:turningpoint}}\label{s.proof.prop:turningpoint}
Recalling $P'(t_i)=0$ and taking the Taylor expansion of $P(r)$ at the turning point $t_i$, we define the function
\begin{align}
\tilde{h}_i(r):=P(t_i)+\frac{1}{2}P''(t_i)(r-t_i)^2,
\end{align}
for which $\tilde{h}_i(r)\to h_i(t_i)=P(t_i)$ as $r\to t_i$. The function $\tilde{h}_i(r)$ is of course not strictly monotonic, but we can suitably partition its support (the subset of its domain where it is positive), giving two strictly monotonic inverses, which we write as a single expression
\begin{align}\label{e.inversesquare}
r=\tilde{h}_{m,\pm}^{-1}(v):=t_i\pm[2(v-P(t_i))/P''(t_i)]^{1/2}.
\end{align}
The rest of the proof follows from the standard steps of finding the distribution of the square of a random variable; see, for example, Gut~\cite[Section~2.2]{gut2009intermediate} or Hogg, McKean, and Craig~\cite[Section~1.7]{hogg2013mathematical}. More specifically, if $X$ is a continuous random variable with distribution $F_X$, then the distribution of the random variable $Y=X^2$ is $F_Y(y)=F_X(\sqrt{y})-F_X(-\sqrt{y})$,
while the probability density is $f_Y(y)=(f_X(\sqrt{y})-f_X(-\sqrt{y}))/(2\sqrt{y})$
. The same reasoning gives the probability distribution
\begin{align}
\Prob(\tilde{h}_i(U)&\leq v)\nonumber\\ 
=\,&\Prob(U\leq t_i+[2(v-P(t_i))/P''(t_i)]^{1/2})\nonumber\\
&-\Prob(U\leq t_i-[2(v-P(t_i))/P''(t_i)]^{1/2}).
\end{align}
Assuming $U$ is uniformly distributed on $(t_{i-1},t_{i+1})$ with distribution $\Prob(U\leq u)= F_U(u)=(u-t_{i-1})/(t_{i+1}-t_{i-1})$, then direct substitution into the above expression gives the first asymptotic formula~\eqref{e.distsingular}, and differentiation of that formula yields the second formula~\eqref{e.densitysingular}. Note: If we wanted to another distribution for the random variable $U$, we just replace the distribution $F_U$ with another distribution, and then derive its Taylor expansion. 

\section{Proof of Proposition~\ref{prop:1}}\label{s.proof.prop:1}
Based on our geometrical model, the reflected rays travel distances for $n\geq 1$, which are given by $\hat{r}_n=nd-r$ and $\hat{\ell}_n=nd+r$ for the $n$-th right and left images, respectively. For this case, the series expression~\eqref{e.signalnlos} for $S(r)$ simplifies to
\begin{align}\label{eq:A1}
S(r)=&\sum_{n=1}^{\infty}\frac{(-\sqrt{\kappa})^n}{(nd-r)^{\beta/2}}e^{jk(nd-r)}\nonumber\\
+&\sum_{n=1}^{\infty}\frac{(-\sqrt{\kappa})^n}{(nd+r)^{\beta/2}}e^{jk(nd+r)}.
\end{align}
We can write the first the infinite sum as
\begin{align}
\nonumber\sum_{n=1}^{\infty}&\frac{(-\sqrt{\kappa})^n}{(nd-r)^{\beta/2}}e^{jk(nd-r)}\\
\nonumber&=\frac{e^{-jkr}}{d^{\beta/2}}\sum_{n\ge 1}\frac{(-\sqrt{\kappa}e^{jkd})^n}{(n-r/d)^{\beta/2}}\\
\nonumber&=\frac{e^{-jkr}}{d^{\beta/2}}\left(\sum_{n=0}^{\infty}\frac{(-\sqrt{\kappa}e^{jkd})^n}{(n-r/d)^{\beta/2}}-(-d/r)^{\beta/2}\right)\\
&=\frac{e^{-jkr}}{d^{\beta/2}}\left(\Phi\left(-\sqrt{\kappa}e^{jkd},\beta/2,-r/d\right)-(-d/r)^{\beta/2}\right).\label{eq:A2}
\end{align}
Substituting expression~\eqref{eq:A2} into expression~\eqref{eq:A1} and performing the same derivation steps for the second infinite sum in expression ~\eqref{eq:A1} yields the final expression~\eqref{e.signallerch}.

\bibliographystyle{IEEEtran}
\bibliography{../shared/bibliography/walls,IEEEabrv}
\end{document}